\documentclass{article}
\usepackage{geometry}
 \geometry{
 a4paper,
 total={170mm,257mm},
 left=20mm,
 top=20mm,
 }

\usepackage[utf8]{inputenc}
\usepackage{comment}
\usepackage{type1cm} 
\usepackage{graphicx} 
\usepackage{xspace} 
\usepackage{balance} 
\usepackage{booktabs} 
\usepackage{multirow} 
\usepackage[font={bf}, tableposition=top]{caption} 
\usepackage{subcaption} 
\usepackage{bold-extra} 
\usepackage{siunitx} 
\usepackage[vlined,linesnumbered,ruled,noend]{algorithm2e} 
\usepackage{microtype} 
\usepackage{xfrac} 
\usepackage{mathtools} 
\usepackage{amssymb} 

\usepackage{url} 
\usepackage[bookmarks, pdftex, colorlinks=true, pagebackref=true, backref=page]{hyperref} 
\usepackage[capitalise]{cleveref} 
\usepackage[numbers]{natbib} 
\usepackage{hyphenat} 
\usepackage[hyperpageref]{backref} 
\PassOptionsToPackage{hyphens}{url} 
\usepackage{bbold}

\usepackage{tikz}
\usepackage{tikz-qtree}
\usetikzlibrary{decorations.pathreplacing}
\usetikzlibrary{shadows}
\usepackage[framemethod=TikZ]{mdframed}
\usetikzlibrary{bayesnet}
\usepackage{wrapfig}
\usepackage{graphicx} 
\usepackage{float}
\usepackage{bbm}
\usetikzlibrary{trees} 
\usepackage{placeins}
\usepackage{blindtext}
\usepackage[export]{adjustbox}
\usepackage[normalem]{ulem}

\graphicspath{{figs}}

\newcommand{\spara}[1]{\smallskip\noindent\textbf{#1}}


\PassOptionsToPackage{dvipsnames}{xcolor}
\hypersetup{
    colorlinks=true,
    linkcolor={red!50!black},
    filecolor={green!50!black},
    citecolor={green!50!black}, 
    urlcolor={blue!80!black},
}
\renewcommand*\backref[1]{\ifx#1\relax \else (Cited on #1) \fi}


\newcommand{\abs}[1]{\ensuremath{\lvert #1 \rvert}\xspace}
\newcommand{\bc}{\ensuremath{\varepsilon}\xspace}

\newcommand{\est}[1]{\ensuremath{\hat{#1}}\xspace}
\newcommand{\init}{\ensuremath{\statevector^0}\xspace}

\newcommand{\order}{\ensuremath{\mathcal{T}}\xspace}
\newcommand{\opspace}{\ensuremath{\mathcal{X}}\xspace}
\newcommand{\rate}{\ensuremath{\mu}\xspace}
\newcommand{\state}{\ensuremath{x}\xspace}
\newcommand{\staterv}{\ensuremath{X}\xspace}
\newcommand{\statevector}{\ensuremath{\mathbf{x}}\xspace}
\newcommand{\statevectorrv}{\ensuremath{\mathbf{X}}\xspace}
\newcommand{\tru}[1]{\ensuremath{{#1}^*}\xspace}
\newcommand{\seedsrasch}{{\ensuremath{4000}}\xspace}

\DeclareMathOperator*{\argmax}{arg\,max}

\usepackage{authblk}
\usepackage{lineno}

\usepackage{amsthm}
\usepackage{thmtools}
\newtheorem{theorem}{Theorem}[section]

\newtheorem{lemma}[theorem]{Lemma}

\title{Bias and Identifiability in the Bounded Confidence Model}
\author[1]{Claudio Borile}
\author[1,2]{Jacopo Lenti}
\author[3]{Valentina Ghidini}
\author[1]{Corrado Monti}
\author[1]{Gianmarco~De~Francisci~Morales}
\affil[1]{CENTAI, Turin, Italy}
\affil[2]{Sapienza University, Rome, Italy}
\affil[3]{Euler Institute, Universit\`a della Svizzera Italiana, Lugano, Switzerland}
\date{}


\begin{document}
\maketitle

\begin{abstract}
Opinion dynamics models such as the bounded confidence models (BCMs) describe how a population can reach consensus, fragmentation, or polarization, depending on a few parameters.
Connecting such models to real-world data could help understanding such phenomena, testing model assumptions.
To this end, estimation of model parameters is a key aspect, and maximum likelihood estimation provides a principled way to tackle it.
Here, our goal is to outline the properties of statistical estimators of the two key BCM parameters: the confidence bound and the convergence rate.
We find that their maximum likelihood estimators present different characteristics:
the one for the confidence bound presents a small-sample bias but is consistent, while the estimator of the convergence rate shows a persistent bias.
Moreover, the joint parameter estimation is affected by identifiability issues for specific regions of the parameter space, as several local maxima are present in the likelihood function.
Our results show how the analysis of the likelihood function is a fruitful approach for better understanding the pitfalls and possibilities of estimating the parameters of opinion dynamics models, and more in general, agent-based models, and for offering formal guarantees for their calibration.

\end{abstract}

\section{Introduction}
\label{sec:intro}
\begin{wrapfigure}{r}{0.33\textwidth}
     \centering
         \includegraphics[trim={1cm 0.7cm 0.5cm 0.7cm},clip,width=0.33\textwidth]{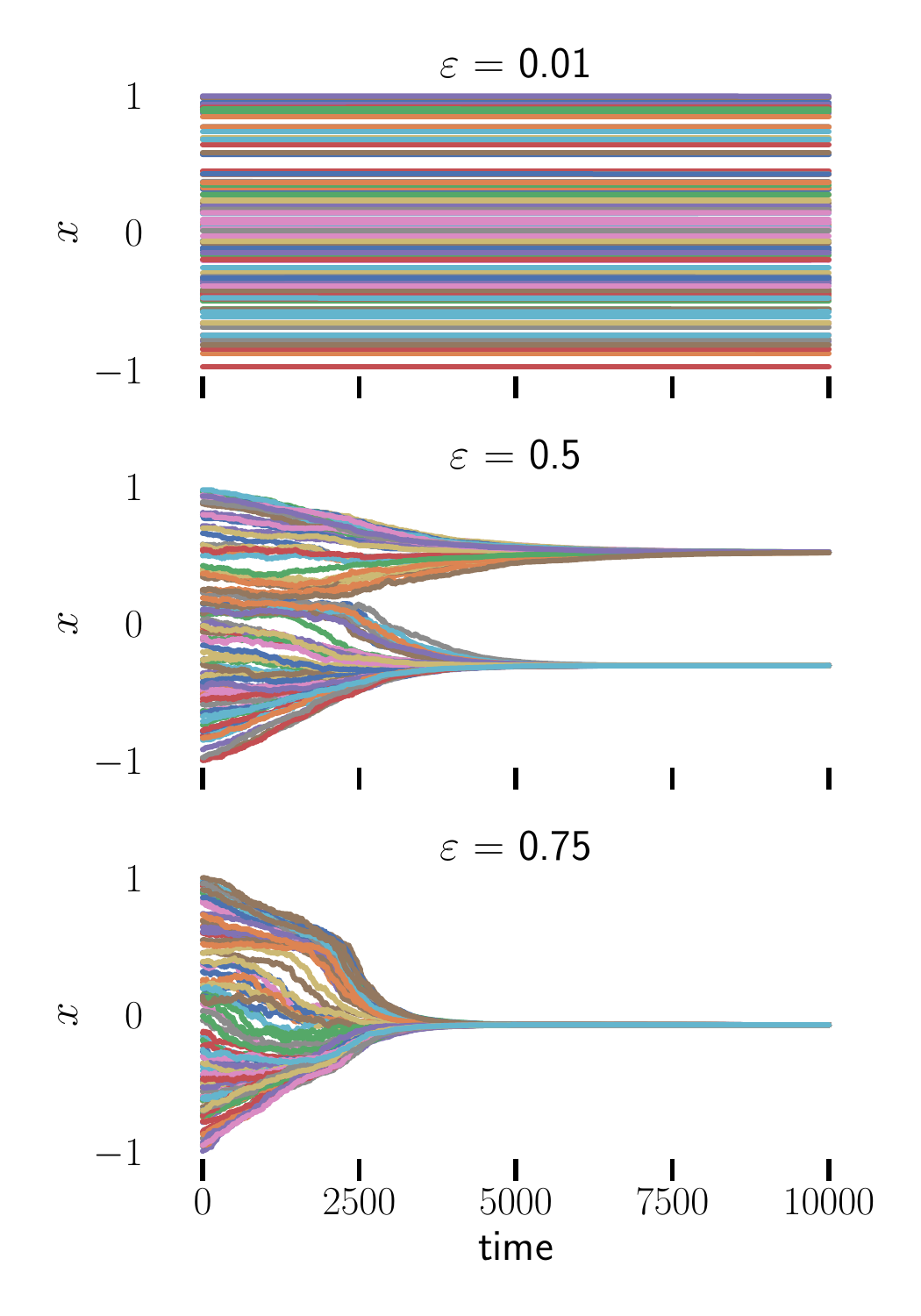}
        \caption{Examples of the evolution of the opinions in the Bounded Confidence Model, varying the \bc parameter.
        }
        \label{fig:PB_example}
\end{wrapfigure}

Opinion dynamics models (ODMs)  aim to uncover the minimal assumptions about individual behavior that yield emergent collective patterns such as consensus or polarization~\cite{hegselmann2002opinion,lorenz2007continuous}.
They are a prominent class of agent-based models (ABMs).
ABMs describe discrete-time dynamical systems via interacting individual agents, with the goal of understanding in this way their emergent macroscopic behavior.
They provide more intuitive specifications over aggregate formalisms such as differential equations, especially for individual-level phenomena, exploring implications of behavioral rules through Monte Carlo simulations.
While these models generate complex dynamics challenging to study analytically, their application is shifting~\citep{carpentras2022mapping,carpentras2023we}: researchers increasingly recognize their potential for modeling and forecasting real-world phenomena~\cite{lavin2021simulation,pangallo2024unequal}.
This shift increases the need for robust parameter estimation from data.

This study confronts the critical challenge of parameter identifiability in Deffuant's Bounded Confidence Model (BCM)~\cite{deffuant2000mixing}, one of the most extensively-studied agent-based models of opinion dynamics~\cite{gomez2012bounded,bernardo2024bounded}.
Traditionally, estimating parameters in ABMs relies on simulation-based approaches~\cite{cranmer2020frontier}.
This often involves an intricate, manual \emph{calibration} process where model outputs are compared against empirical summary statistics~\cite{windrum2007empirical,platt2020comparison}.
This process is often hampered by subjectivity and the inherent technical challenges of interpreting high-dimensional ABM output data~\cite{lee2015complexities}.
Compounding this issue, rigorous studies that verify the reliability of these estimations remain notably scarce.
Indeed, this ``dearth of empirical works''~\cite{flache2017models,castellano2009statistical} highlights the need for more principled model inference methodologies.

A critical aspect of any reliable parameter estimation in ABMs is \emph{identifiability}: the ability to accurately recover true parameter values given sufficient data.
Essentially, for any estimation to be meaningful, distinct parameter values should produce distinguishable data distributions---a vital property for valid calibration and reliable modeling outcomes.
Identifiability is typically categorized into \emph{structural} (theoretical recoverability with ideal data) and \emph{practical} (recoverability considering finite, noisy data)~\cite{wieland2021structural,lee2015complexities,lenti2024likelihoodbased}.
Consequently, a model can be structurally identifiable yet suffer from practical identifiability issues~\citep{kiss2022parameter,gallo2022lack}.
This concern is echoed in studies of epidemic models~\cite{robin2024system}, where practical identifiability can be elusive due to noise and model assumptions, even for mission-critical COVID-19 models~\cite{burger2023computational}.

To address this aspect, we advocate for likelihood-based methods.
While universally important, the principled statistical foundations of likelihood-based inference benefit significantly the formal analysis of identifiability.
This framework offers a level of rigor and analytical tools for studying identifiability that are often difficult to achieve within purely simulation-based techniques.
This approach stems from a recent trend of casting ABMs into probabilistic generative models with well-defined likelihood functions~\cite{monti2020learning,monti2023learning,lenti2024likelihoodbased,lenti2024variational}, thereby granting access to the robust and principled toolkit of statistical inference.

In this paper, we specifically investigate the extent to which the BCM's two fundamental parameters---the confidence bound \bc and the convergence rate \rate~\cite{deffuant2000mixing}---can be accurately estimated from synthetic, model-generated simulation traces.
\Cref{fig:PB_example} illustrates how varying \bc alone can produce diverse macro-level phenomena, underscoring the parameter's importance.
To rigorously tackle identifiability for the BCM, we adopt a maximum likelihood estimation (MLE) framework.
A key methodological step is our development of a probabilistic variant of the BCM (sBCM), a probabilistic interpretation that generalizes the original formulation (see \Cref{sec:models}) and is essential for likelihood-based analysis.
We then investigate the properties of the maximum likelihood estimators, assuming micro-level agent interactions are observable while the macro-level model parameters are latent.

Our analysis addresses both theoretical and practical identifiability.
Theoretically, we examine observational equivalence and estimator consistency.\footnote{The ABM community refers to the problem that different models or their parameterizations can lead to the same output as ``equifinality"~\cite{an2023modeling}.
Some studies have tackled the problem of identifiability with simulation-based approaches, focusing on the global behavior of the system~\cite{bai2022efficient, carrella2021no}.
However, as the ABM community usually focuses on the macroscopic properties arising from the system, equifinality typically maps the connection between parameters and the summary statistics of interest emerging during the simulation.
It is straightforward that lack of microscopic identifiability (the topic of this paper) implies macroscopic equifinality.}
Notably, for the confidence bound \bc, we prove an analytical upper bound to the bias of its MLE and demonstrate its asymptotic unbiasedness (\Cref{theorem:asymptotically_unbiased}).
Practically, we evaluate estimator variance to assess reliability.
Our results demonstrate that the MLE for \bc is asymptotically unbiased, while the estimator for the convergence rate \rate exhibits a persistent bias, a consequence of violating standard MLE regularity conditions as the parameter \rate influences the support of the data-generating distribution (\Cref{sec:anl-est-rate}).
Furthermore, joint parameter estimation can be affected by practical identifiability issues for specific regions of the parameter space.
Ultimately, this work illustrates how maximum-likelihood estimation theory offers a general methodology for assessing and addressing identifiability issues in ABM calibration, highlighting potential pitfalls in the inference of even simple models.

\section{Models}
\label{sec:models}

\spara{Bounded Confidence Models.}
The class of BCMs considers a stochastic process defined on the opinion space of $N$ agents~$\opspace=[-1, 1]^N$.
Given the length of the data trace $T$ (i.e., the number of steps in the stochastic process), the sequence of states yielded by the BCM can be modeled by a sequence of random variables $\statevectorrv^0, \ldots, \statevectorrv^T$ defined on the probability space $\left( \opspace , \mathcal{B}(\opspace) \right)$, where $\mathcal{B}$ is the standard Borel $\sigma$-algebra.
The realizations of such random variables are the states of the process, denoted by $\init, \ldots, \statevector^T$, where $\statevector^t = (\state_1^t, \ldots, \state_N^t) \in \opspace$.

In Deffuant's formulation, at each time step $t \in [T]$, two agents $1 \leq i,j \leq N$ are chosen uniformly at random to interact.
The internal state of agent $i$ is then updated according to the rule

\begin{equation}
\staterv_i^{t+1} \mid \big(\staterv_i^{t} = \state_i^t, \staterv_j^{t} = \state_j^t \big) = \begin{cases}
	\state_i^t + \rate(\state_j^t - \state_i^t) & \text{ if } \abs{\state_j^t - \state_i^t} \leq \bc\\
	\state_i^t & \text{ otherwise; }
\end{cases}
\label{eq:original_bcm}
\end{equation}
and an equivalent update happens for agent $j$.
The parameter $\bc \in [0, 2]$ is called the \emph{confidence bound}, and it represents the maximum distance of opinions such that the two agents are able to interact and change their opinions.
The parameter $\rate \in \left[0, \sfrac{1}{2}\right]$, instead, is the \emph{convergence rate}, which encodes the mobility of opinions.
The process has infinitely many absorbing states~\cite{dubovskaya2023} depending on the value of \bc, that is, all those configurations where the absorbing state presents $C$ clusters of agents with the same opinions such that the inter-distance between any two opinions is larger than \bc.

\spara{Stochastic Bounded Confidence Model (sBCM).}
Since we are interested in the study of statistical estimators, we adopt a probabilistic variant of the Deffuant model~\citep{deffuant2000mixing}.
The stochastic relaxation we study was introduced by \citet{monti2020learning} to fit real data, and it reduces to the original BCM in the limit as $\rho \to \infty$.
Such a generalization allows us to study the inference of the parameters of the model as statistical estimators, while at the same time preserving the original model as an asymptotic limit.
It allows a probabilistic interpretation of the process such that any given data trace has a non-zero probability, however small, of being generated by a given parameterization of the model.
In addition, it enables differentiability of the likelihood function for easier computation of the MLE.
In practice, the discontinuous step-function of the piecewise definition is replaced by a logistic function $\sigma_{\rho}(z) = \sfrac{1}{(1+e^{-\rho z})}$, with the parameter $\rho$ controlling the steepness of the sigmoid.
This way the deterministic update rule becomes a stochastic trial, where the original model is recovered as the sigmoid approaches the step function (for $\rho \to \infty$).
In this sense, it falls into the category of BCM with ``smooth influence''~\citep{deffuant2002can,deffuant2006comparing,steiglechner2024noise,edmonds2005assessing}.

Therefore, we consider the following stochastic process.
At each time step, we draw a pair of agents~$(i, j)$ uniformly from the~$\binom{V}{2}$ possible pairs, thus determining the sequence of agent draws $\order=((i^1, j^1, 1), \ldots, (i^T, j^T, T))$.
Let $E$ represent the set of \emph{successful} interactions.
The triplet $(i, j, t) \in \order$ belongs to $E$ with probability
\begin{equation}
P\left( (i, j, t) \in E \mid \staterv_i^{t} = \state_i^t, \staterv_j^{t} = \state_j^t  \right) = \sigma_{\rho} \left( \bc - \abs{\state_i^t-\state_j^t} \right).
\label{eq:pijt}
\end{equation}
Then, the opinions of both agents are updated as in the original model only if the interaction was successful
 \begin{equation}
 \staterv_i^{t+1} \mid \big(\staterv_i^{t} = \state_i^t, \staterv_j^{t} = \state_j^t \big) = \begin{cases}
	\state_i^t + \rate(\state_j^t - \state_i^t) & \text{ if } (i, j, t) \in E\\
	\state_i^t & \text{ otherwise. }
\end{cases}
\label{eq:modified_bcm}
\end{equation}

\section{Analysis} \label{sec:analysis}

Our goal is to study how the parameters $\Theta=(\bc, \rate)$ of the sBCM can be estimated from observed data with a likelihood-based approach.
We restrict our focus to estimating \bc and \rate, as $\rho$ is a nuisance parameter introduced in the stochastic translation, and is of little modeling interest.

To this end, we develop the Probabilistic Generative Model (PGM) associated with the ABM.
This way, we can obtain and analyze the likelihood function of the latent variables, similarly to \citet{lenti2024likelihoodbased}.
Let us define the latent variables, the observed variables, and the parameters of the PGM.
We first estimate \bc and \rate separately, by considering the other one known; then we consider the scenario when both parameters are latent.
In the sBCM process, the sequence of agents draws \order and the set of successful interactions $E \subseteq \order$ are stochastic, and we assume them observable.
We also consider the initial opinion state \init to be known; the case when \init is not known is far more complex~\citep{monti2020learning} and is deferred to future work.
In summary, the parameters \bc and \rate are latent (one at a time, and then together), while the initial state \init, the interactions \order, and their outcomes $E$ are observable.
Let us then consider the sequence of the realizations of the opinion states $\statevector^{<t} := \init, \ldots, \statevector^{t-1}$.
Each element~$\statevector^t$ can be computed deterministically given~$\statevector^{t-1}, \rate$, and $E^t$ according to \Cref{eq:modified_bcm}.
Therefore,~$\statevector^{<t}$ is a deterministic function of the parameters $\Theta$, the initial conditions~\init, and the sequence of stochastic variables~$E^{<t} \coloneqq 
\{(i, j, t') \in E \mid t' < t\}$.
\Cref{fig:graphical-model} is the graphical representation of the PGM.
This diagram already hints that the estimation process of \bc and \rate is necessarily very different: while \bc has a direct effect on the observed variables $E$, the parameter \rate only connects to the unobserved variables $\{ \statevector^t \}_{t > 0}$ that encode the current state of the system.


\begin{figure}
\begin{center}
\begin{tikzpicture}[halfcircle/.style={circle, draw=#1, path picture={\fill[#1] (path picture bounding box.north east) rectangle (path picture bounding box.south);}}, halfcircle/.default=gray!25]
\tikzstyle{halflatent} = [halfcircle,fill=white,draw=black,inner sep=1pt,minimum size=20pt, font=\fontsize{10}{10}\selectfont, node distance=1]

\node[obs]                          (x0)     {$\statevector^0$};

\node[halflatent, above=0.6cm of x0]    (mu)     {$\rate$};
\node[halflatent, below=0.6cm of x0]    (eps)    {$\bc$};

\node[det, right=2cm of x0]          (x1)     {$\statevector^1$};
\node[obs, below=0.6cm of x1]        (e1)     {$E^1$};
\edge {x0} {e1}
\edge {x0} {x1}
\edge {e1} {x1}
\edge {eps}{e1}
\edge {mu} {x1}

\node[det, right=2cm of x1]          (x2)     {$\statevector^2$};
\node[obs, below=0.6cm of x2]        (e2)     {$E^2$};
\edge {x1} {e2}
\edge {x1} {x2}
\edge {e2} {x2}
\path[->] (mu) edge[bend left] node [left] {} (x2);
\path[->] (eps) edge[bend right] node [left] {} (e2);

\node[det, right=2cm of x2]          (x3)     {$\statevector^3$};
\node[obs, below=0.6cm of x3]        (e3)     {$E^3$};
\edge {x2} {e3}
\edge {x2} {x3}
\edge {e3} {x3}
\path[->] (mu) edge[bend left] node [left] {} (x3);
\path[->] (eps) edge[bend right] node [left] {} (e3);

\node[right=2cm of x3] (endx) {};
\path (x3) -- node[auto=false]{\ldots} (endx);
\node[right=2cm of e3] (ende) {};
\path (e3) -- node[auto=false]{\ldots} (ende);



\end{tikzpicture}
\end{center}
\vspace{-0.5\baselineskip}
\caption{Graphical model diagram of the stochastic BCM for $T=3$ time-steps. Diamonds indicate deterministic variables, white circles indicate latent stochastic variables, and grey circles indicate observed stochastic variables.
Half gray circles (\bc, \rate) indicates variables that are either latent or observable according to the experimental scenario considered.
}
\label{fig:graphical-model}
\vspace{-\baselineskip}
\end{figure}

\label{sec:role-bc-rate}

\spara{Likelihood function.}
We consider the likelihood function as the probability of observing a certain interaction graph $E\subseteq \order$, that is the set of all agent draws that were successful, given a set of parameters.
Since all interactions are independent, from \Cref{eq:pijt} it follows that such a function is

\begin{equation}
	\mathcal{L}(\Theta \mid \init, E) = %
	P(E \mid \init, \Theta) = %
	\prod_{(i, j, t) \in E}\sigma_{\rho}(\bc-\lvert \state_i^t - \state_j^t \rvert) %
	\prod_{(i, j, t) \in E^c}[1-\sigma_{\rho}(\bc - \lvert \state_i^t - \state_j^t \rvert)] ,
	\label{eq:p_E}
\end{equation}
with $E^c = \order \setminus E$ such that $E\cup E^c=\order$ and  $E\cap E^c=\varnothing$.

\spara{Maximum Likelihood Estimation.}
Given this view of the process, we ask whether the MLEs for the parameters of the sBCM $\Theta=(\bc, \rate)$ are biased. 
Recall that a MLE is a function of the observed data that is used to infer the parameters of its underlying probability distribution, by maximizing its likelihood function to make the observed data most probable.

In the following, we analyze the MLE of \bc and of \rate.
For \bc, we write the complete data likelihood and analyze its derivative with respect to \bc, assuming the other parameter \rate is known.
In this case, the realizations $\statevector^t$ are known and the stochastic events (i.e., $E^t$) are independent of each other given the sequence $\init, \ldots, \statevector^t$.
While the estimator has no closed form, this framework allows us to derive an analytical upper bound to the bias of the estimator.
Such a bias is in practice small as it vanishes at a rate of $T^{-1}$.

On the contrary, the case for \rate is more complex, as we are unable to rely on the standard results for MLEs.
In this case, $\statevector^t$ is a function of \rate, and therefore the stochastic events $E^t$ are no longer independent.
In addition, the likelihood of \rate involves the recursive computation of the sequence $\init, \ldots, \statevector^t$ which is hard to treat analytically.
Finally, we show that both the likelihood of \rate and its support depend on the value of \rate, thus hindering the application of standard results from the statistical theory of MLEs.

Our experiments in \Cref{sec:exp-est-rate} show that the MLE of \rate is indeed biased upwards.
As discussed in \Cref{sec:discussion}, we speculate that the reason for this bias is that only successful interactions (i.e., when $(i, j, t) \in E$) give information about \rate, since \rate affects the evolution of $\statevector^t$ only in this case.
At the same time, successful interactions suggest that the opinions~$\state_1^t, \ldots, \state_N^t$ are converging over time, which is more likely when \rate is higher.

\subsection{Estimation of \texorpdfstring{\bc}{Confidence bound}}
\label{sec:anl-est-bc}

In this section  we derive the MLE of \bc, which maximizes the probability of observing $E$ expressed in \Cref{eq:p_E}. We prove that the estimator is asymptotically unbiased and that the absolute value of the bias is bounded from above by $\mathcal{O}(\frac{1}{\rho T})$. Finally, we provide conditions for the existence of the estimator and its relation with Deffuant's formulation of the BCM.

Let \rate and \init be observed.
Since the opinion update is a deterministic function of the interactions, the previous opinions, and \rate, we can compute $\state_i^t$ for any $i$ and $t$.

\begin{theorem}
    Let $\est{\bc}$ be the MLE of $\bc$. Then, 
    \begin{equation}
    \lvert Bias(\est{\bc})\rvert < \frac{1}{8 \rho T}.  
    \end{equation}
    Thus, $\est{\bc}$ is asymptotically unbiased.
\label{theorem:asymptotically_unbiased}
\end{theorem}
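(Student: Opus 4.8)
\emph{Proof strategy.} The plan is to reduce the problem to a (conditionally) independent Bernoulli model and then extract the bias from the curvature of the score equation. Since \rate and \init are observed and each update in \Cref{eq:modified_bcm} is a deterministic function of the past outcomes, the whole trajectory $\statevector^0,\dots,\statevector^T$ --- and hence every realized distance $d_t := \abs{\state_{i^t}^t-\state_{j^t}^t}$ --- is recoverable from the data. Writing $Y_t$ for the indicator that the $t$-th draw is successful, \Cref{eq:pijt} says that, given the filtration $\mathcal{F}_{t-1}$ generated by the past, $Y_t$ is Bernoulli with success probability $p_t=\sigma_{\rho}(\bc-d_t)$, where $d_t$ is $\mathcal{F}_{t-1}$-measurable. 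First I would write the log-likelihood from \Cref{eq:p_E} and differentiate to obtain the score $S(\bc)=\rho\sum_{t=1}^{T}\big(Y_t-\sigma_{\rho}(\bc-d_t)\big)$, together with $S'(\bc)=-\rho^2\sum_t \sigma_{\rho}(\bc-d_t)\big(1-\sigma_{\rho}(\bc-d_t)\big)<0$. Strict concavity of the log-likelihood then guarantees that, whenever an interior maximiser exists, it is the unique root $\est{\bc}$ of $S$.

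The key structural fact is that the score is unbiased at the truth: because $p_t=\mathbb{E}[Y_t\mid\mathcal{F}_{t-1}]$, the increments $Y_t-p_t$ form a martingale-difference sequence, so $\mathbb{E}[S(\tru{\bc})]=0$ and $\tru{\bc}$ is the root of the expected score. The bias of $\est{\bc}$ thus arises purely from the nonlinearity of the estimating equation, which I would quantify by expanding $S$ to second order about $\tru{\bc}$ (equivalently, by inverting the strictly increasing map $m(\bc)=\sum_t\sigma_{\rho}(\bc-d_t)$, since $\est{\bc}=m^{-1}(\sum_t Y_t)$). Writing $\Delta=\est{\bc}-\tru{\bc}$ and using $S(\est{\bc})=0$ gives $\Delta=-S(\tru{\bc})/S'(\tru{\bc})-\tfrac12\,S''(\xi)\,\Delta^2/S'(\tru{\bc})$. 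Were the curvature $S'(\tru{\bc})$ deterministic, the $O(T^{-1/2})$ first term would be centred, so taking expectations would isolate the $O(T^{-1})$ curvature term governed by $\mathbb{E}[S''(\tru{\bc})]$ and the conditional variance of the score.

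To obtain the stated constant I would bound the sigmoid derivatives uniformly. Using $\sigma_{\rho}'=\rho\,\sigma_{\rho}(1-\sigma_{\rho})\le \rho/4$ and the analogous bound on $\sigma_{\rho}''$, each of the $T$ increments contributes a bounded amount, the conditional score variance and the curvature both scale like $\rho^2 T$, and the factor $\tfrac14=\max\sigma_{\rho}(1-\sigma_{\rho})$ together with the $\tfrac12$ from the second-order term produces the claimed bound $\tfrac{1}{8\rho T}$. Sending $T\to\infty$ then yields $\mathbb{E}[\est{\bc}]\to\tru{\bc}$, i.e.\ asymptotic unbiasedness.

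The hard part will be the non-i.i.d.\ dependence: the distances $d_t$, and hence the curvature terms $S'(\tru{\bc})$ and $S''(\tru{\bc})$, are themselves random and correlated with the score $S(\tru{\bc})$, so the first-order term does not vanish term-by-term and must be handled through the martingale structure (with $d_t$, $m'$, and $m''$ all $\mathcal{F}_{t-1}$-measurable). Care is also needed to control the third-order remainder of the expansion, to restrict to the event on which the interior MLE exists (this is where the promised existence conditions enter), and to keep the curvature bounded away from zero, so that the ratio retains the clean $O\!\big(\tfrac{1}{\rho T}\big)$ form with the explicit constant.
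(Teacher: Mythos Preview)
Your strategy is essentially the paper's, stripped of its citation shortcut. The paper also derives the score equation $\sum_t\sigma_\rho(\est{\bc}-d_t)=m$ and strict concavity (\Cref{lemma:sum_prob}), then recognises the problem as estimating the single ability parameter in a Rasch model with known item difficulties $d_t$ (\Cref{lemma:rasch}) and imports Lord's closed-form first-order bias, which is exactly the $\tfrac12\,\ell'''/(\ell'')^2$ term your second-order expansion of the estimating equation produces. So the mathematical content coincides; you re-derive what the paper cites. Your martingale framing is in one respect more careful: the Rasch analogy treats the $d_t$ as fixed constants, whereas you correctly note that in the sBCM they are $\mathcal F_{t-1}$-measurable random variables, so the curvature $S'(\tru{\bc})$ is itself random and correlated with the score. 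The paper does not address this point.

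There is, however, a genuine gap in how you reach the constant $1/(8\rho T)$. After the expansion one lands on $|Bias|\le \bigl(2\rho\sum_t p_t(1-p_t)\bigr)^{-1}$ with $p_t=\sigma_\rho(\tru{\bc}-d_t)$. Passing from this to $1/(8\rho T)$ requires a \emph{lower} bound $\sum_t p_t(1-p_t)\ge 4T$. But your stated ingredient, $\sigma_\rho(1-\sigma_\rho)\le \tfrac14$, gives only $\sum_t p_t(1-p_t)\le T/4$: it bounds the curvature from \emph{above}, hence the bias ratio from \emph{below} --- the wrong direction. What is actually needed is a uniform lower bound on each $p_t(1-p_t)$, i.e.\ an argument that the realised $\tru{\bc}-d_t$ stay in a compact range so that the per-observation information does not degenerate. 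Without that, the displayed constant does not follow from the inequalities you invoke; the paper's own final inequality makes the same slip, so following it more closely would not have rescued this step.
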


To prove \Cref{theorem:asymptotically_unbiased}, 
we first prove two lemmas, to implicitly derive the MLE and to connect it with Rasch models~\citep{fischer200616}, and then we will give the proof of the theorem.
For the sake of clarity, we define $s_{ij}^t$ as the outcome of interaction between $i$ and $j$ at time $t$.
If $(i,j,t)\in E$, $s_{ij}^t = 1$, otherwise $s_{ij}^t = 0$.

\begin{lemma}
Let \est\bc be the MLE of \bc, and $m = \sum\limits_{(i,j,t) \in \order} s_{ij}^t$ the number of positive interactions, then $\est\bc$ satisfies
\begin{equation}
    \sum\limits_{(i,j,t) \in \order} \sigma_{\rho}(\est\bc - \lvert \state^t_i - \state^t_j \rvert) = m.
\label{eq:mle_bc}
\end{equation}
\label{lemma:sum_prob}
\end{lemma}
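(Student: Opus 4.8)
The plan is to show that \Cref{eq:mle_bc} is exactly the first-order stationarity condition of the log-likelihood \Cref{eq:p_E}. The observation that makes this clean is the one already noted above: when \rate and \init are observed, each opinion state \statevector^t is a deterministic function of the observed interactions and \rate alone, so the pairwise distances $\lvert \state_i^t - \state_j^t \rvert$ are \emph{constants} with respect to \bc. Hence the only dependence of \likelihood on \bc enters through the explicit argument of each logistic factor $\sigma_{\rho}(\bc - \lvert \state_i^t - \state_j^t \rvert)$, and maximizing over \bc reduces to a standard smooth one-dimensional optimization.

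First I would pass to the log-likelihood. Writing $z_{ij}^t \coloneqq \bc - \lvert \state_i^t - \state_j^t \rvert$ and using the outcome indicator $s_{ij}^t$ to merge the two products of \Cref{eq:p_E} into a single sum over \order, we get
\[
\log \likelihood(\bc) = \sum_{(i,j,t)\in\order}\Bigl[ s_{ij}^t \log \sigma_{\rho}(z_{ij}^t) + (1-s_{ij}^t)\log\bigl(1-\sigma_{\rho}(z_{ij}^t)\bigr)\Bigr].
\]
Next I would differentiate term by term, using the logistic identity $\sigma_{\rho}'(z) = \rho\,\sigma_{\rho}(z)\bigl(1-\sigma_{\rho}(z)\bigr)$ together with $\partial z_{ij}^t/\partial\bc = 1$. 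This gives $\partial_\bc \log \sigma_{\rho}(z_{ij}^t) = \rho\bigl(1-\sigma_{\rho}(z_{ij}^t)\bigr)$ and $\partial_\bc \log\bigl(1-\sigma_{\rho}(z_{ij}^t)\bigr) = -\rho\,\sigma_{\rho}(z_{ij}^t)$. Substituting and collecting terms, the cross-products $s_{ij}^t\,\sigma_{\rho}(z_{ij}^t)$ cancel and the whole derivative collapses to
\[
\frac{\partial \log\likelihood}{\partial \bc} = \rho \sum_{(i,j,t)\in\order}\bigl( s_{ij}^t - \sigma_{\rho}(z_{ij}^t)\bigr).
\]

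Finally, setting this derivative to zero at $\bc = \est\bc$ and recalling $m = \sum_{(i,j,t)\in\order} s_{ij}^t$ rearranges immediately into \Cref{eq:mle_bc}. There is no genuine obstacle once the \bc-independence of the distances is established; the only point requiring care is to confirm that this stationary point is indeed the maximizer rather than merely a critical point. This follows from strict concavity: the second derivative equals $-\rho^2\sum_{(i,j,t)\in\order}\sigma_{\rho}(z_{ij}^t)\bigl(1-\sigma_{\rho}(z_{ij}^t)\bigr) < 0$ whenever \order is nonempty, so any interior root of \Cref{eq:mle_bc} is the unique global maximizer. Boundary cases in which no interior root exists correspond precisely to the existence conditions treated separately, and so do not affect the characterization claimed in the lemma.
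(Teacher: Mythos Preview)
Your argument is correct and follows essentially the same route as the paper: write the log-likelihood as a single sum over \order, differentiate using the logistic identity, set the score to zero to obtain \Cref{eq:mle_bc}, and verify strict concavity via the second derivative. Your version is in fact slightly more careful than the paper's, both in making explicit that the distances $\lvert \state_i^t - \state_j^t\rvert$ are \bc-independent once \rate and \init are observed, and in tracking the $\rho$ factor that the paper drops in its first-derivative line (harmless for the zero set, but your $-\rho^2\sum\sigma_\rho(1-\sigma_\rho)$ is the correct second derivative).
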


\begin{proof}
To prove \Cref{lemma:sum_prob}, we maximize \Cref{eq:p_E} with respect to \bc.
Specifically, we set the derivative of the log-likelihood equal to 0, and we verify that the solution corresponds to a maximum.
For brevity, we define $\kappa_{i, j, t}(\bc) \triangleq \sigma_{\rho}(\bc - |\state_i^t - \state_j^ t|)$

\begin{equation}
\begin{split}
\frac{\partial \log \mathcal{L}}{\partial \bc} &= \frac{\partial}{\partial \bc} \left[ \sum_{(i, j, t) \in \order} \left( s_{ij}^t \log\kappa_{i, j, t}(\bc)+(1-s_{ij}^t)\log\left(1-\kappa_{i, j, t}(\bc)\right) \right) \right] \\
&= \sum_{(i, j, t) \in \order} s_{ij}^t\frac{\kappa_{i, j, t}(\bc) \left(1-\kappa_{i, j, t}(\bc)\right)}{\kappa_{i, j, t}(\bc)}-(1 - s_{ij}^t)\frac{\kappa_{i, j, t}(\bc) \left(1-\kappa_{i, j, t}(\bc)\right)}{1-\kappa_{i, j, t}(\bc)}\\
&= \sum_{(i, j, t) \in \order}s_{ij}^t - \kappa_{i, j, t}(\bc).
\end{split}
\end{equation}

By setting $\frac{\partial \log \mathcal{L}}{\partial \bc} = 0$, we obtain the relation
\begin{equation}
    \sum_{(i, j, t) \in \order}\kappa_{i, j, t}(\est\bc) = m,
\end{equation}
which 
corresponds to \Cref{eq:mle_bc}. 
This result demonstrates that the number of positive interactions is a sufficient statistic for the MLE of \bc.

The second derivative is given by
\begin{equation}
	\begin{split}
	\frac{\partial^2 \log \mathcal{L}}{\partial \bc^2} &= 
	-\frac{\partial \log \mathcal{L}}{\partial \bc}\sum_{(i, j, t)\in \order} \kappa_{i, j, t}(\bc)+m \\
	&= -\rho\sum_{(i, j, t) \in \order}\kappa_{i, j, t}(\bc) \left(1-\kappa_{i, j, t}(\bc)\right) %
	< 0 \quad \forall\ \bc ,
	\end{split}
\end{equation}
which proves that \Cref{eq:mle_bc} is a maximum of the likelihood function, implicitly defining the MLE for \bc.
\end{proof}

Intuitively, \Cref{eq:mle_bc} can be interpreted as considering all attempted interactions between pairs of agents, each one showing a distance $\lvert \state_i^t - \state_j^t \rvert$, and then finding $\est{\bc}$ by shifting all the corresponding sigmoids so that their sum---that is, the expected number of successful interactions---is equal to the observed one $m$. 
In practice, the parameter can easily be estimated by leveraging off-the-shelf optimization procedures applied to the likelihood functional.
As depicted by \Cref{fig:likelihood-example}, finding the maximum of this function is straightforward and allows us to find the true value of the confidence bound \tru{\bc}.

In our experiments, we use the secant method for finding the root when the derivative can be explicitly computed, and the Nelder-Mead minimization algorithm for the negative log-likelihood function when no derivative is provided, both functions are implemented in the SciPy Python package \cite{2020SciPy-NMeth}.

\begin{figure}
     \centering
         \includegraphics[width=0.8\textwidth]{./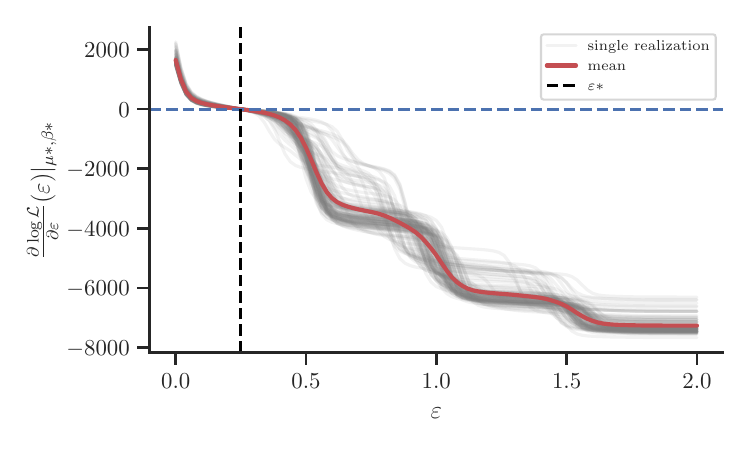}
        \caption{Example of $\partial\log\mathcal{L}(\Theta \mid \init, E)/\partial\bc$ over $\bc\in [0, 2]$. The derivative is monotonically decreasing. The red curve is the mean over all the realizations of the dynamics with fixed $\init, \order$.
        The derivative is zero for $\bc=\tru{\bc}$, i.e. the true value that originated the observed data.
        \label{fig:likelihood-example}
        }
\end{figure}

After deriving the MLE of \Cref{eq:mle_bc}, we study its bias $\mathbb{E}(\est{\bc}) - \bc$.
Our definition of sBCM allows us to exploit a formal equivalence with
Rasch models, a well-established framework in Item Response theory~\cite[Ch. 16]{fischer200616}.
In item response experiments, individuals respond to a set of items.
Rasch models characterize each individual with an ability parameter, and each item with a difficulty parameter.
Under the assumptions of Rasch models, the probability that an individual correctly answer to an item increases with the ability of the person, while it decreases with the item difficulty.

\begin{lemma}
    Estimating \bc in a sBCM with observed $\state^0$ and known \rate is equivalent to estimating the ability parameter of a single individual in a Rasch model.
    In particular, \bc is associated to the person parameter, and $|\state_i^t - \state_j^ t|$ is associated to the item parameter.
\label{lemma:rasch}
\end{lemma}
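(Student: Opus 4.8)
The plan is to exhibit a direct, term-by-term algebraic identification between the likelihood of the sBCM in \Cref{eq:p_E} and the likelihood of a single-person Rasch model, so that the two maximization problems coincide. Recall that in a Rasch model an individual with ability $\theta$ answers an item of difficulty $\beta$ correctly with probability $\sigma_1(\theta - \beta)$, where $\sigma_1$ is the standard logistic function; given a vector of binary responses, the log-likelihood is a sum of Bernoulli terms $r\log\sigma_1(\theta-\beta) + (1-r)\log\bigl(1-\sigma_1(\theta-\beta)\bigr)$, with $r\in\{0,1\}$ the response. My goal is to rewrite every factor of \Cref{eq:p_E} into exactly this shape.

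First I would fold the steepness $\rho$ into the argument of the sigmoid, using $\sigma_{\rho}(\bc - \lvert \state_i^t - \state_j^t\rvert) = \sigma_{1}\!\left(\rho\bc - \rho\lvert \state_i^t - \state_j^t\rvert\right)$. This suggests the correspondence $\theta \mapsto \rho\bc$ for the person (ability) parameter and $\beta_{ijt} \mapsto \rho\lvert \state_i^t - \state_j^t\rvert$ for the item (difficulty) parameter, with the common discrimination $\rho$ acting as a fixed scaling shared by all items. Under this map the success probability in \Cref{eq:pijt} becomes the Rasch response probability $\sigma_1(\theta - \beta_{ijt})$, and each outcome $s_{ij}^t$ plays the role of the binary response $r$. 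Substituting these rewritten factors into \Cref{eq:p_E} then identifies $\log\likelihood$ with the single-person Rasch log-likelihood, so that maximizing over \bc is identical to estimating that individual's ability.

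Next I would argue that the item difficulties are \emph{fixed and known} in our setting. Because \rate and \init are observed, every state $\statevector^t$ is a deterministic function of \init, \rate, and the observed interactions (as established in \Cref{sec:analysis}), so each distance $\lvert \state_i^t - \state_j^t\rvert$ is a known constant rather than a quantity to be estimated. Consequently the attempted interactions $(i,j,t)\in\order$ form a collection of $\lvert\order\rvert$ items with prescribed difficulties, all answered by a \emph{single} respondent whose ability is \bc (up to the scaling $\rho$). This also recasts the sufficiency statement of \Cref{lemma:sum_prob} as the classical Rasch fact that the raw score $m$ is a sufficient statistic for the ability parameter, providing an internal consistency check on the identification.

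The step I expect to require the most care is conceptual rather than computational: one must verify that the \emph{single}-individual, \emph{known}-item-difficulty specialization is the correct Rasch setup to invoke, since the classical Rasch model jointly estimates person and item parameters across many respondents. Checking that fixing the item parameters---here forced by the observability of \rate and \init---yields a well-posed ability-estimation problem, and that the shared scaling $\rho$ is innocuous because it multiplies both parameters identically, is precisely what licenses importing the known bias and consistency results for Rasch ability estimates when proving \Cref{theorem:asymptotically_unbiased}.
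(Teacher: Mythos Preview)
Your proposal is correct and follows essentially the same route as the paper: a direct term-by-term identification between the sBCM interaction probability in \Cref{eq:pijt} and the Rasch item-response probability, together with the observation that known \rate and \init make every distance $|\state_i^t-\state_j^t|$ a fixed item difficulty.

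The only noteworthy difference is in how $\rho$ is handled. You absorb $\rho$ into both sides via $\sigma_\rho(z)=\sigma_1(\rho z)$, mapping $\theta\mapsto\rho\bc$ and $\beta_{ijt}\mapsto\rho|\state_i^t-\state_j^t|$. The paper instead invokes the three-parameter logistic form $c_k+(1-c_k)/(1+e^{-A_k(\theta_j-b_k)})$ and sets $A_k\triangleq\rho$, $c_k\triangleq 0$, $\theta\triangleq\bc$, $b_k\triangleq|\state_i^t-\state_j^t|$. The paper's choice has two small advantages: it matches the lemma's wording verbatim (\bc itself, not $\rho\bc$, is the person parameter), and it plugs directly into Lord's bias formula for the 3PL model used in the proof of \Cref{theorem:asymptotically_unbiased}, where $A_i$ appears explicitly. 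Your rescaling is equally valid but would require carrying the factor $\rho$ through that later computation by hand.
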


We provide the proof of \Cref{lemma:rasch} in \Cref{apx:rasch}, by simply substituting the variables of sBCM into the equation of Rasch models.
Now, we have all the ingredients to prove \Cref{theorem:asymptotically_unbiased}.
\begin{proof}[Proof of \Cref{theorem:asymptotically_unbiased}]
By leveraging well-known results on Rasch models, in \Cref{apx:biasvariance} we obtain the closed formula of the bias of the MLE.
We denote $h = (i,j,t)$ and $\kappa_h = \kappa_{i,j,t}(\bc)$.
\begin{equation}
    Bias(\est\bc) = \frac{1}{\rho \left(\sum\limits_{h \in \order} \kappa_{h} (1 - \kappa_{h}) \right)^2} \sum\limits_{h \in \order} \kappa_{h} (1 - \kappa_{h}) \left(\kappa_{h} - \frac{1}{2}\right)
    \label{eq:bias_rasch}
\end{equation}

Hence, the estimator is biased for small samples.
Specifically, the only term of \Cref{eq:bias_rasch} that can be negative is $\kappa_{h} - \frac{1}{2}$.
This implies that the bias is positive in the experiments with a higher weight given by positive interactions, consequently higher values of $\kappa_{h} - \frac{1}{2}$, and negative otherwise.
Therefore, the MLE pushes towards the extreme values,  overestimating high values of \bc, while underestimating low values of \bc.

Next, we can write
\begin{align}
\begin{split}
    \lvert Bias(\hat\bc) \rvert &= \left| \frac{1}{\rho \left(\sum\limits_{h \in \order} \kappa_{h} (1 - \kappa_{h}) \right)^2} \sum\limits_{h \in \order} \kappa_{h} (1 - \kappa_{h}) \left(\kappa_{h} - \frac{1}{2}\right) \right| \\
    &=  \frac{1}{\rho \left(\sum\limits_{h \in \order} \kappa_{h} (1 - \kappa_{h}) \right)^2} \sum\limits_{h \in \order} \kappa_{h} (1 - \kappa_{h}) \left(\right| \kappa_{h} - \frac{1}{2} \left| \right)  \\
    &<  \frac{1}{\rho \left(\sum\limits_{h \in \order} \kappa_{h} (1 - \kappa_{h}) \right)^2} \sum\limits_{h \in \order} \kappa_{h} (1 - \kappa_{h})  \frac{1}{2}  \\
    &=  \frac{1}{2 \rho \sum\limits_{h \in \order} \kappa_{h} (1 - \kappa_{h}) } \\
    &< \frac{1}{8 \rho T},    
\end{split}
\end{align}
and the bias is of order $T^{-1}$.
\end{proof}

Moreover, the bias goes to 0 as $\rho\rightarrow\infty$ (see \Cref{fig:rho_vs_bias}).
This means that as the sBCM tends to its deterministic counterpart, that is, Deffuant's BCM, the bias of MLE disappears.

Furthermore, we also have the estimate of the variance of the MLE (see \Cref{apx:biasvariance}), that is,

\begin{equation}
    Var(\est\bc) = \frac{1}{\rho^2 \sum\limits_{h \in \order} \kappa_{h} (1 - \kappa_{h})}.
\label{eq:var}
\end{equation}

\begin{figure}
     \centering
         \includegraphics[width=\textwidth]{./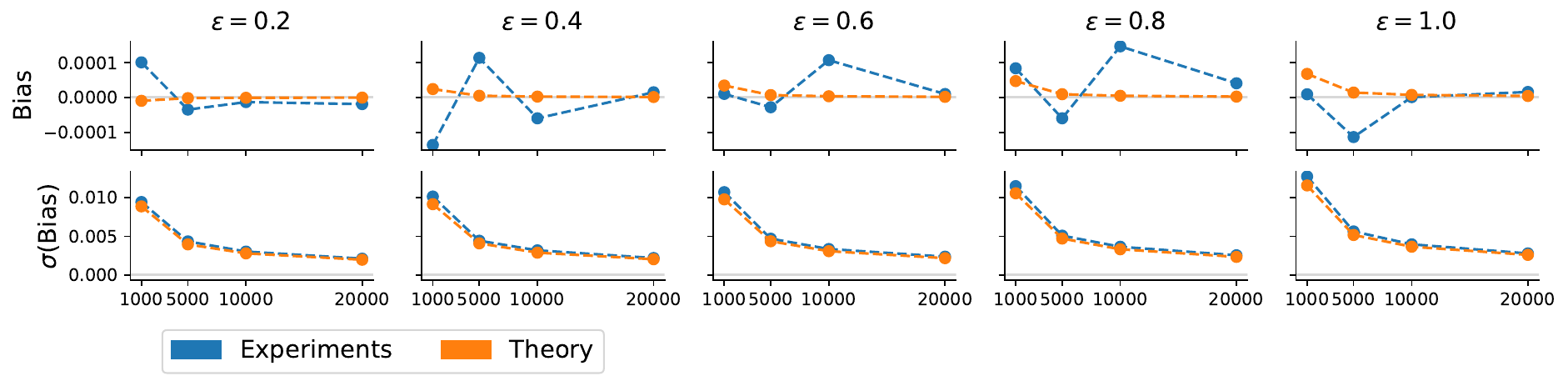}
        \caption{
        Comparison between theoretical bias and experimental bias.
        For each value in $\bc\in \{0.2, 0.4, 0.6, 0.8, 1.0\}$, for each $T \in \{1000, 5000, 10000,20000\}$, we run a set of \seedsrasch experiments, and we estimate \est\bc.
        In each experiment, we generate a trajectory of length $T$, with $1000$ agents, $\rate = 0.01$, and $\init$ uniformly distributed in $[-1,1]$.
        Note that the std. dev. of the bias is two orders of magnitude larger than the bias itself.
        }
        \label{fig:bias}
        
\end{figure}

In \Cref{fig:bias}, we compare the theoretical bias with the empirical bias obtained in a set of experiments.
In these experiments, we generate a data traces with sBCM, and estimate \est\bc.
First, we note that the bias decreases as $T$ grows, as well as the variance, and consequently the standard deviation.
From \Cref{fig:bias}, we notice that the standard deviation is order of magnitudes greater than the bias.
This result shows that the bias exists, but it is small compared to estimator variability.

\subsection{Estimation of \texorpdfstring{\rate}{Convergence Rate}}
\label{sec:anl-est-rate}

Let us shift our focus to estimating the rate parameter~\rate.
As illustrated by \Cref{fig:graphical-model}, we consider again $\state^0$ to be known, but subsequent states $\state^t$ with $t > 0$ are unknown: since \rate is to be estimated, in fact, future states cannot be computed simply from the sets of successful interactions.
In this section, we examine the likelihood concerning the trajectory of a single agent's opinion in the most basic scenario.
Through this approach, we analytically demonstrate some fundamental concerns in its formulation.
In particular, \Cref{lemma:lemma1,lemma:lemma2} show that both the support and the likelihood of the opinion under scrutiny depend on \rate.
This violates one of the regularity conditions underlying classical inferential MLE theory, which is typically leveraged to establish the consistency of MLE estimators.
This is in line with the results of the experiments in \Cref{sec:exp-est-rate}, which illustrate an upward bias in the MLE as it approaches its limit.



\spara{Biasedness.}
To analytically show the impossibility of employing classical statistical theory for the estimator of~\rate, 
let us consider a general setting in the simplest case, that is with the presence of two agents.
Thus, only one pair of agents can interact at each time step with a given probability.
We analyze the likelihood of the opinion trajectory of the agents in the first few time steps.  

More formally, consider two agents with opinions $\statevectorrv_i = (\staterv_i^0, \ldots,\staterv_i^T) , i=1, 2$ and denote their difference of opinion at time $t$ as 
\[
d_{12}^t = \staterv_2^t - \staterv_1^t.
\]
With a slight abuse of notation, we denote by $d_{12}^t$ also the realization of the distance between the two opinions at time step $t$, i.e.\ $d_{12}^t = \state_2^t -\state_1^t$.
Since we only consider two agents in our model, the set of (possible) interactions is fixed by construction: at each time step $t = 1,2, \ldots, T$, agent $1$ and $2$ can either interact (and change their opinions) or not interact.

Henceforth, we study the likelihood associated with the opinion of the first agent.
This choice is without loss of generality since the same calculations hold for the second agent just by swapping the indices.
\Cref{fig:opinion_trajectory} represents the evolution of the opinion of the first node, where the support at each time step is shown in the rectangles, while the (conditional) probabilities associated with each value are on the edges.
The random variable $\staterv_1^0$ possesses the Markov property, since its value at time $t+1$ conditional on its realization at time $t$ is independent of the previous realizations.
More interestingly, at each time step we can observe only one novel value for~$\staterv_1^t$: the opinion support $S^1_t$ for $t=1, 2, \ldots, T$ is such that $S^1_t \subset S^1_{t+1}$ and $\abs{S^1_{t+1}} = \abs{S^1_t} + 1$. 
Additionally, at each time step, only a single branch introduces a new opinion. Meanwhile, all the remaining branches maintain conditions already observed in the previous steps, thus essentially reiterating past dynamics in opinion shifts.
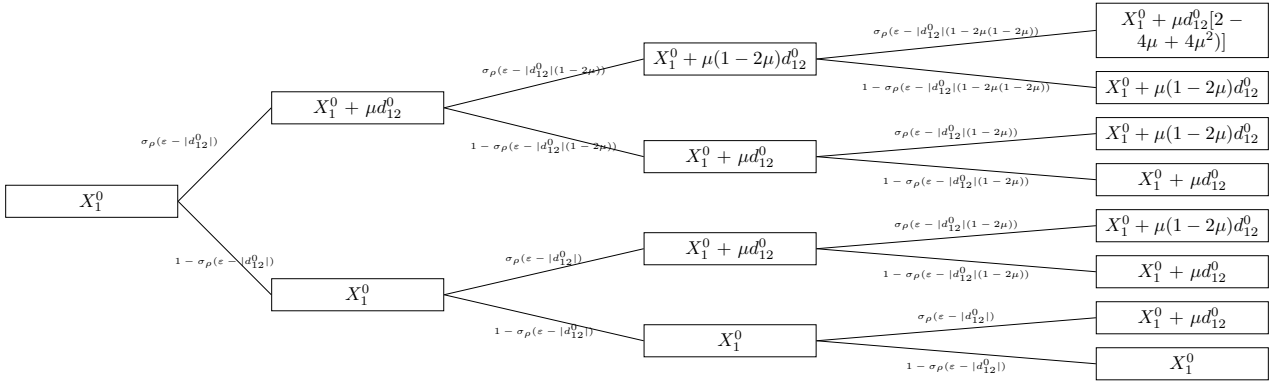
\begin{figure}
    \centering
    \scalebox{.7}{
    \begin{tikzpicture}[grow'=right,sibling distance=.25cm]

\tikzset{
level 1/.style={level distance = 5cm},
    level 2/.style={level distance = 7cm},
    level 3/.style={level distance = 8.5cm},
    every edge from parent/.style= 
            {edge from parent fork right},
         every tree node/.style=
            {draw,minimum width=3cm,text width=3cm,align=center},
            every node/.append style={midway}
            }
\Tree 
    [. $\staterv_1^0$ 
     \edge node[above left] {\tiny $\sigma_{\rho}(\bc - |d_{12}^0|)$}; 
        [. {$\staterv_1^0  + \rate d_{12}^0$ }
          \edge node[above] {\tiny $\sigma_{\rho}(\bc - |d_{12}^0|(1-2\rate))$}; 
            [.{$\staterv_1^0 + \rate (1-2\rate) d_{12}^0$ } 
            \edge node[above] {\tiny $\sigma_{\rho}(\bc - |d_{12}^0|(1-2\rate(1-2\rate))$}; 
            [. {$\staterv_1^0  + \rate d_{12}^0 [2-4\rate +4\rate^2)]$ } ]
            \edge node[below] {\tiny $1-\sigma_{\rho}(\bc - |d_{12}^0|(1-2\rate(1-2\rate))$}; 
            [. {$\staterv_1^0 + \rate (1-2\rate) d_{12}^0$ } ]
            ]
            \edge  node[below] {\tiny $1-\sigma_{\rho}(\bc - |d_{12}^0|(1-2\rate))$}; 
            [.{$\staterv_1^0  + \rate d_{12}^0$ }
             \edge node[above] {\tiny $\sigma_{\rho}(\bc - |d_{12}^0|(1-2\rate))$}; 
             [.{$\staterv_1^0 + \rate (1-2\rate) d_{12}^0$ } ]
            \edge  node[below] {\tiny $1-\sigma_{\rho}(\bc - |d_{12}^0|(1-2\rate))$}; 
            [.{$\staterv_1^0  + \rate d_{12}^0$ } ]
            ]
        ]
         \edge node[below] {\tiny $1-\sigma_{\rho}(\bc - |d_{12}^0|)$}; 
        [.$\staterv_1^0$ 
         \edge node[above] {\tiny $\sigma_{\rho}(\bc - |d_{12}^0|)$}; 
                [. {$\staterv_1^0  + \rate d_{12}^0$ } 
                \edge node[above] {\tiny $\sigma_{\rho}(\bc - |d_{12}^0|(1-2\rate))$}; 
            [.{$\staterv_1^0 + \rate (1-2\rate) d_{12}^0$ } ]
            \edge  node[below] {\tiny $1-\sigma_{\rho}(\bc - |d_{12}^0|(1-2\rate))$}; 
            [.{$\staterv_1^0  + \rate d_{12}^0$ } ]
                ]
                 \edge node[below] {\tiny $1-\sigma_{\rho}(\bc - |d_{12}^0|)$}; 
            [.{$\staterv_1^0$ }
             \edge node[above] {\tiny $\sigma_{\rho}(\bc - |d_{12}^0|)$}; 
                [. {$\staterv_1^0  + \rate d_{12}^0$ } ]
                  \edge node[below] {\tiny $1-\sigma_{\rho}(\bc - |d_{12}^0|)$}; 
                [. {$\staterv_1^0 $ } ]
            ]
        ] 
        ]
\end{tikzpicture}
}
    \caption{Opinion trajectory of the first agent, under the assumption of two nodes and fixed set of interactions.}
    \label{fig:opinion_trajectory}
\end{figure}
Given this framework, we can compute the likelihood associated with the opinion trajectory $\state_1^t,\ t=\{1, 2\}$.
\begin{lemma}\label{lemma:lemma1}
In the framework described above, the opinion of the first agent $\staterv_1^t$ at time $t = 1$, conditioned on~$\staterv_1^0 = \state_1^0$, is a Bernoulli random variable defined as
\begin{align}\label{eq:bernoulli1}
        \staterv_1^1 \mid (\staterv_1^0 = \state_1^0) = \begin{cases}
         \state_1^0 + \rate d_{12}^0 &\text{ with probability } \sigma_{\rho}(\bc - |d_{12}^0|), \\
         \state_1^0  &\text{ with probability } 1 - \sigma_{\rho}(\bc - |d_{12}^0|). 
        \end{cases}
\end{align}
Thus, the conditional likelihood becomes
\begin{equation}\label{eq:cond_lik1}
p( \staterv_1^1 \mid \staterv_1^0 = \state_1^0) =  \sigma_{\rho}(\bc - |d_{12}^0|)^{\mathbb{1}_{ \{ \staterv_1^1 = \state_1^0 + \rate d_{12}^0 \} }  } \left[1 - \sigma_{\rho}(\bc - |d_{12}^0|) \right]^{\mathbb{1}_{ \{ \staterv_1^1 = \state_1^0  \} }  }.
\end{equation}
\end{lemma}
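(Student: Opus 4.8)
The plan is to prove this directly by unwinding the two generative rules of the sBCM---\Cref{eq:pijt} for the success probability and \Cref{eq:modified_bcm} for the opinion update---specialized to the two-agent system at its very first step. The crucial simplification is that with $N = 2$ the only pair that can ever be drawn is $(1,2)$, so the draw sequence \order carries no randomness about \emph{which} agents meet; the sole source of stochasticity at the step $\staterv_1^0 \to \staterv_1^1$ is whether the attempted interaction between $1$ and $2$ succeeds or fails. This collapses the transition to a single binary trial, which is exactly the Bernoulli structure the lemma asserts.

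First I would fix the conditioning: given $\staterv_1^0 = \state_1^0$ and the known initial state, the quantity $\abs{d_{12}^0} = \abs{\state_2^0 - \state_1^0}$ is a deterministic constant, so by \Cref{eq:pijt} the first interaction is successful with probability $\sigma_{\rho}(\bc - \abs{d_{12}^0})$ and unsuccessful with the complementary probability. Next I would read off the two possible values of $\staterv_1^1$ from \Cref{eq:modified_bcm}: on a successful interaction the update gives $\state_1^0 + \rate(\state_2^0 - \state_1^0)$, which equals $\state_1^0 + \rate d_{12}^0$ after substituting the definition $d_{12}^0 = \state_2^0 - \state_1^0$; on an unsuccessful interaction the opinion is unchanged and stays at $\state_1^0$. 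Pairing each value with its probability yields the two-point law of \Cref{eq:bernoulli1}.

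Finally, to obtain \Cref{eq:cond_lik1} I would write the probability mass function of this Bernoulli variable in the standard indicator-exponent form, raising $\sigma_{\rho}(\bc - \abs{d_{12}^0})$ to the power $\mathbb{1}_{\{\staterv_1^1 = \state_1^0 + \rate d_{12}^0\}}$ and $1 - \sigma_{\rho}(\bc - \abs{d_{12}^0})$ to the power $\mathbb{1}_{\{\staterv_1^1 = \state_1^0\}}$; since the two atoms are distinct whenever $\rate d_{12}^0 \neq 0$, exactly one indicator equals $1$, so the product reproduces the correct atom probability in each case. There is no genuine analytic obstacle here---the statement is essentially a restatement of the model definition---so the only points demanding care are bookkeeping ones: recognizing that the fixed draw sequence removes all pair-selection randomness, and substituting $d_{12}^0 = \state_2^0 - \state_1^0$ consistently so that the update $\rate(\state_2^0 - \state_1^0)$ and the displayed value $\rate d_{12}^0$ coincide.
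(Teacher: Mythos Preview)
Your proposal is correct and follows essentially the same logic as the paper, which does not provide an explicit written proof but conveys the argument through the opinion-trajectory tree in \Cref{fig:opinion_trajectory} and the surrounding discussion: with two agents the draw is forced, so the single step $\staterv_1^0 \to \staterv_1^1$ reduces to one Bernoulli trial governed by \Cref{eq:pijt} and \Cref{eq:modified_bcm}. Your written derivation is simply a verbalization of that tree's first level, including the substitution $d_{12}^0 = \state_2^0 - \state_1^0$ and the indicator-exponent rewriting of the pmf.
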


\begin{lemma}\label{lemma:lemma2}
In the framework described above, the opinion of the first agent $\staterv_1^t$ at time $t = 2$, conditioned on~$\staterv_1^0 = \state_1^0$, is a Multinomial random variable defined as
\begin{align}\label{eq:multinomial2}
        \staterv_1^2  \mid (\staterv_1^0 = \state_1^0) = \begin{cases}
         \state_1^0 + \rate (1-2\rate) d_{12}^0 &\text{ w.p. } \sigma_{\rho}(\bc - |d_{12}^0|)\cdot \sigma_{\rho}(\bc - |d_{12}^0| (1-2\rate)), \\
         \state_1^0  + \rate d_{12}^0 &\text{ w.p. } \sigma_{\rho}(\bc - |d_{12}^0|)\cdot \left[1 - \sigma_{\rho}(\bc -  |d_{12}^0| (1-2\rate)) \right] + \\& +\left[1- \sigma_{\rho}(\bc - |d_{12}^0|) \right] \cdot \sigma_{\rho}(\bc - |d_{12}^0|), \\
         \state_1^0  &\text{ w.p. } [1- \sigma_{\rho}(\bc - |d_{12}^0|)]^2.
        \end{cases}
\end{align}
Thus, the conditional likelihood becomes
\begin{align}\label{eq:cond_lik2}
p( \staterv_1^2 \mid \staterv_1^0 = \state_1^0) = & [\sigma_{\rho}(\bc - |d_{12}^0|) \sigma_{\rho}(\bc - |d_{12}^0| (1-2\rate))]^{\mathbf{1}_{\{ \staterv_1^2 = \state_1^0 + \rate (1-2\rate) d_{12}^0 \} }}  \\
 \cdot &[\sigma_{\rho}(\bc - |d_{12}^0|) [1 - \sigma_{\rho}(\bc - |d_{12}^0| (1-2\rate)) ] + [1- \sigma_{\rho}(\bc - |d_{12}^0|)] \cdot \sigma_{\rho}(\bc - |d_{12}^0|)]^{\mathbf{1}_{\{ \staterv_1^2 = \state_1^0  + \rate d_{12}^0 \} }}  \\
 \cdot &  \{[1- \sigma_{\rho}(\bc - |d_{12}^0|)]^2\}^{\mathbf{1}_{\{ \staterv_1^2 = \state_1^0 \} }}.
\end{align}

\end{lemma}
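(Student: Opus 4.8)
The plan is to leverage the Markov structure of the opinion process made explicit in \Cref{fig:opinion_trajectory}, together with \Cref{lemma:lemma1}. Conditioning on $\staterv_1^0 = \state_1^0$ and recalling that with only two agents the sequence of \emph{possible} interactions is fixed, the value of $\staterv_1^2$ is a deterministic function of the two binary interaction outcomes $s^0, s^1 \in \{0,1\}$ at times $t=0$ and $t=1$. I would therefore compute the joint law of $(s^0,s^1)$ and push it forward through the opinion update map.

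The first step is to record how a successful interaction acts. From \Cref{eq:modified_bcm}, a success at time $t$ sends $\state_1^{t+1} = \state_1^t + \rate d_{12}^t$, and since agent $2$ moves symmetrically the gap contracts as $d_{12}^{t+1} = (1-2\rate)\,d_{12}^t$; a failure leaves both $\state_1^t$ and $d_{12}^t$ untouched. This recursion is the heart of the argument, because the success probability at step $t$ is $\sigma_{\rho}(\bc - |d_{12}^t|)$, so it depends on the realized history only through the current gap $d_{12}^t$ (and $1-2\rate \ge 0$ since $\rate \le \sfrac12$, so the contracted gap keeps the sign conventions used inside the sigmoid).

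With this in hand I would enumerate the four trajectories. The path that succeeds twice (SS) has probability $\sigma_{\rho}(\bc - |d_{12}^0|)\,\sigma_{\rho}(\bc - |d_{12}^0|(1-2\rate))$, where the second factor already reflects the contracted gap $d_{12}^1 = (1-2\rate)d_{12}^0$; it is the only path reaching the extreme displacement, giving the first case of \Cref{eq:multinomial2}. The path that fails twice (FF) has probability $[1-\sigma_{\rho}(\bc - |d_{12}^0|)]^2$ and keeps $\staterv_1^2 = \state_1^0$, giving the third case. The remaining two paths, success--fail and fail--success, both terminate at $\state_1^0 + \rate d_{12}^0$: in the former the gap has contracted before the (failed) second attempt, while in the latter it has not, so the two carry the distinct probabilities $\sigma_{\rho}(\bc - |d_{12}^0|)[1-\sigma_{\rho}(\bc - |d_{12}^0|(1-2\rate))]$ and $[1-\sigma_{\rho}(\bc - |d_{12}^0|)]\sigma_{\rho}(\bc - |d_{12}^0|)$. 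Since they hit the same opinion, the probability of the middle case is their sum, matching \Cref{eq:multinomial2}; assembling the three masses as powers of the corresponding indicators then yields the conditional likelihood \Cref{eq:cond_lik2}, exactly as in \Cref{lemma:lemma1} but over a three-valued support.

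The only point demanding care---and the one that makes this lemma meaningful rather than routine---is that the second-step success probability is governed by the \emph{random} gap $d_{12}^1$, which equals $(1-2\rate)d_{12}^0$ on the branches that first succeeded and $d_{12}^0$ on those that first failed. Tracking these two conditional gaps separately is what produces genuinely $\rate$-dependent transition probabilities and a three-point support whose atoms themselves move with $\rate$; this is precisely the dependence of both the likelihood and its support on the parameter that we wish to surface, and it is what later obstructs the classical MLE regularity conditions. Everything else is elementary bookkeeping over the two time steps.
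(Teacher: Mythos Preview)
Your proposal is correct and follows essentially the same approach as the paper: the paper does not give a written proof of the lemma but relies on the opinion-trajectory tree in \Cref{fig:opinion_trajectory}, and your argument is precisely an explicit enumeration of that tree's four branches (SS, SF, FS, FF), driven by the observation that a success contracts the gap by the factor $1-2\rate$. The grouping of the SF and FS paths into the middle atom, and the remark that the second-step success probability depends on the realized first-step outcome through $d_{12}^1$, are exactly what the edge labels in the figure encode.
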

\Cref{lemma:lemma1,lemma:lemma2} report the likelihoods of the opinion trajectory of one agent, given a simple dynamic involving only two agents.
Note that \Cref{lemma:lemma2} can be easily generalized for all the successive time steps~$t = 3, 4, \ldots, T$, just by expanding the support of the Multinomial opinion and recalculating the corresponding probabilities.
\Crefrange{eq:bernoulli1}{eq:cond_lik2} highlight a major issue that prevents us from applying standard maximum likelihood estimator theory~\cite{casella2002}: both the likelihood and the support of $\staterv_1^t$ depend on the parameter of interest~$\rate$, for $t \geq 1$.
This violates the standard regularity conditions under which classical theorems of MLEs hold, among which the consistency (see \citet[Section~10.6.2]{casella2002}).
In other words, we can not guarantee the consistency of the MLE for \rate using theorems from classical statistical inference.
This analysis holds for 2 agents; generalizing to $N$ agents is analytically complicated but doable in principle.
It does not, however, change the nature of the estimator: the MLE still provides no consistency guarantee.

Similar situations arise in some works in the literature~\citep{ferguson1982,hirano2003}.
These references demonstrate that the maximum likelihood estimator for a parameter lacks consistency and efficiency when the support of the modeled random variable depends on it.
Nevertheless, the methods proposed in these works are tailored and specific to particular circumstances, making their proofs and strategies non-generalizable to our context.

\section{Experiments}

In light of these analytical considerations, we perform experiments to assess the behavior of the MLEs for \bc and \rate in practice.
The experiments follow a simulation-based approach.
Our goal is to estimate empirically the error of the MLEs over the possible realizations of the model. 
To do so, we simulate the generative model by drawing uniformly at random $Q$ different values for the initial conditions \init.
For each of these values, we run the dynamics of the simulation as per \Cref{eq:pijt} for each time step.
Given that the model is probabilistic, we repeat the process $K$ times for each initial condition.
For each run, we apply the MLE to obtain an estimate of the studied parameters, \bc and \rate, and present summary statistics over the $K$ runs (to reduce stochastic noise).
By aggregating over the different $Q$ values, we estimate the expected distance (over the realizations of \init) of the MLE from the true value, as well as its variance.
For all the experiments we keep \order fixed, as, empirically, it does not have a significant impact.
As a robustness check, we use the same protocol by randomly drawing \order $Q$ times while keeping \init fixed. We obtain similar results across all the experiments.

We use the general setting described above in the following different configurations.
First, we apply this protocol to the estimate of \bc (\Cref{sec:exp-est-bc}) and then to the estimate of \rate (\Cref{sec:exp-est-rate}).
As stated in \Cref{sec:role-bc-rate}, these two cases follow different approaches:
the MLE of \bc is defined analytically, and to find it we solve $\sfrac{\partial\log\mathcal{L}}{\partial\bc} = 0$ numerically for \bc. Instead, to compute the MLE of \rate we directly maximize the likelihood numerically.
In both cases we vary the number of nodes of the graph $N$, and analyze the evolution of the estimates as a function of the length of the simulation $T$.
Then, \Cref{sec:exp-est-joint} analyzes the joint estimation of both parameters.
Other parameters are fixed
($\tru{\bc}=0.25$, $\tru{\rate}=0.5$, and $K=Q=100$).

\begin{figure}[ht!]
     \centering
     \begin{subfigure}[b]{0.49\textwidth}
         \centering
         \includegraphics[width=\textwidth]{./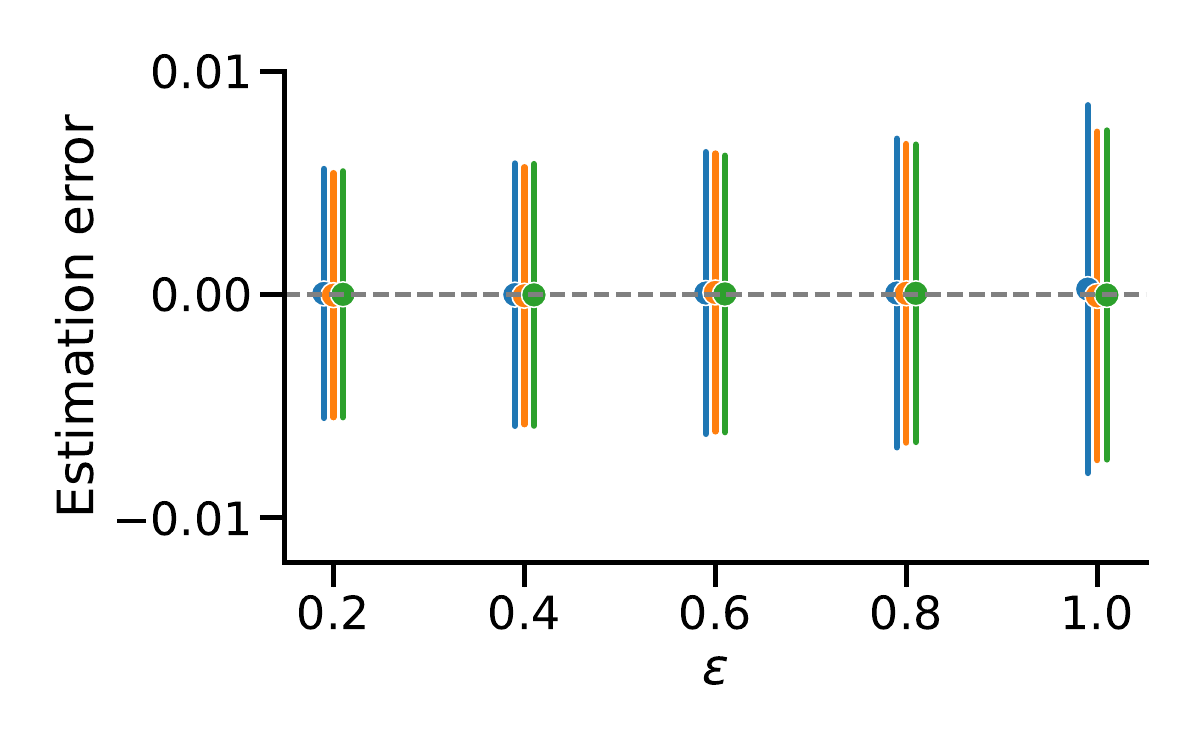}
         \caption{}
         \label{fig:minim_eps_mean}
     \end{subfigure}
     \hfill
     \begin{subfigure}[b]{0.49\textwidth}
         \centering
         \includegraphics[width=\textwidth]{./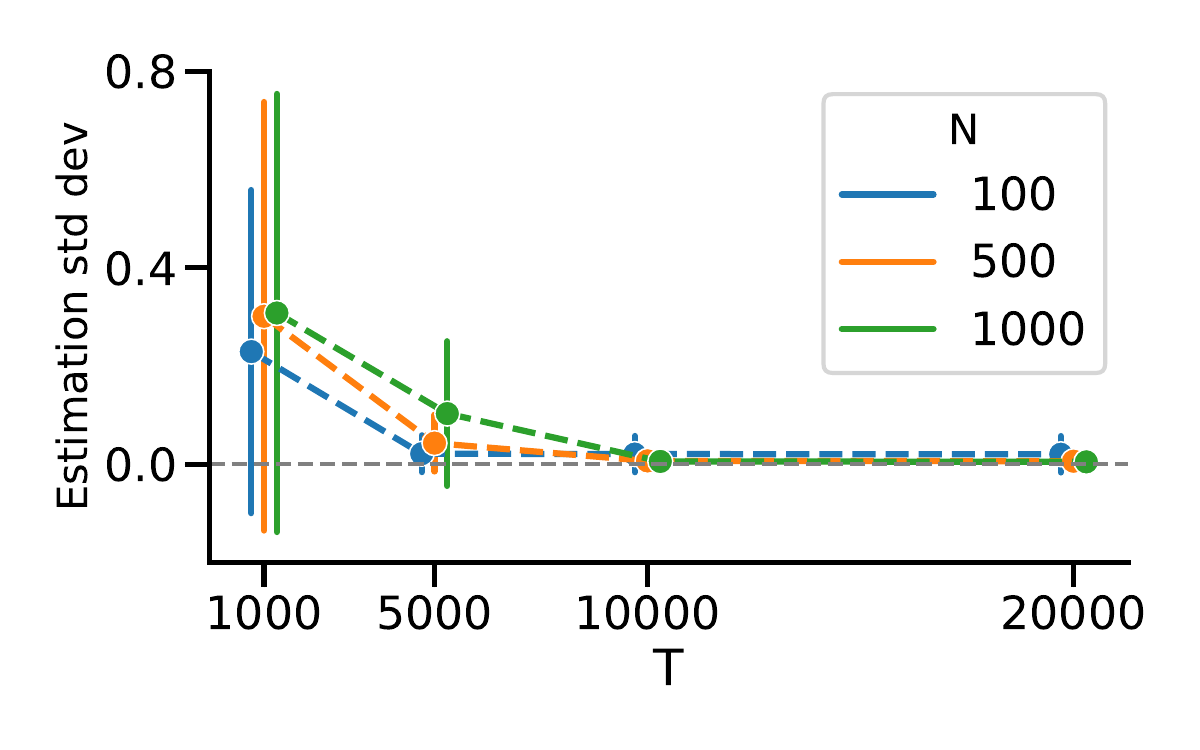}
         \caption{}
        \label{fig:minim_eps_std}
     \end{subfigure}
     \caption{(a) Estimation error for the bounded confidence parameter \est{\bc}. Error bars represent the std. dev. of the error. For high values of \bc the variance of the estimates increases and a small positive bias is visible for the small-scale scenario ($N=100$).
     (b) Std. dev. of the error for the bounded confidence parameter \est{\bc}. Error bars represent the standard error with
     \init fixed, $100$ different initialization of \order for $N = 100, 500, 1000$ and $T = 1000, \ldots, 20000$.  Repeated for $100$ values of \init.}
     
\end{figure}

\subsection{Estimation of \texorpdfstring{\bc}{Confidence bound}}
\label{sec:exp-est-bc}

The first goal of our experiments is to empirically verify that \est{\bc} is unbiased as found analytically in \Cref{sec:anl-est-bc}.
As such, we assume \tru{\rate} to be known.
\Cref{fig:minim_eps_mean} shows the expected estimation error over $Q$ different initial conditions.
Even for very small graphs~($N=10$) the error is of the order~$10^{-3}$, and quickly approaches zero as~$T$ grows.
For larger graphs the expected error of the estimate is almost zero even with a small number of samples.

Since we could not find an analytical form for the variance of this MLE, we resort to an empirical evaluation to assess it.
\Cref{fig:minim_eps_std} shows the variance of the estimator as a function of time and for different values of $N$.
Even for values of $N$ as small as $100$, the variance of the estimator is of the order of $10^{-2}$.
For larger networks, and when more samples are available (e.g., as $T$ increases), the estimator's variance decreases.
Overall, the variance of the MLE is small in absolute terms.

\subsection{Estimation of \texorpdfstring{\rate}{Convergence Rate}}
\label{sec:exp-est-rate}
In contrast to the estimation of \bc, \rate does not admit an explicit analytical form for the MLE. As already discussed in \Cref{sec:anl-est-rate}, the unbiasedness of the estimation cannot be guaranteed in the classical statistical theory framework.
In fact, we show experimentally that this lack of guarantees corresponds indeed to a bias: the estimation of \rate by numerical maximization of the likelihood generates a consistently biased estimation of the parameter. 
Similarly to the estimation of \bc, we consider in this case $\bc^*$ to be known and we obtain the MLE of~\rate by numerical optimization.
\Cref{fig:minim_mu_mean} shows the expected estimation error over $Q=100$ different initial conditions and realizations of the dynamics.
The figure shows a consistent, upward bias in the estimation.
This bias is approximately $10-15\%$ of the true value.
\Cref{fig:minim_mu_std} shows the standard deviation of the estimator, that is an order of magnitude larger than that of \bc even for large system sizes.
\begin{figure}[t]
     \centering
     \begin{subfigure}[b]{0.49\textwidth}
         \centering
         \includegraphics[width=\textwidth]{./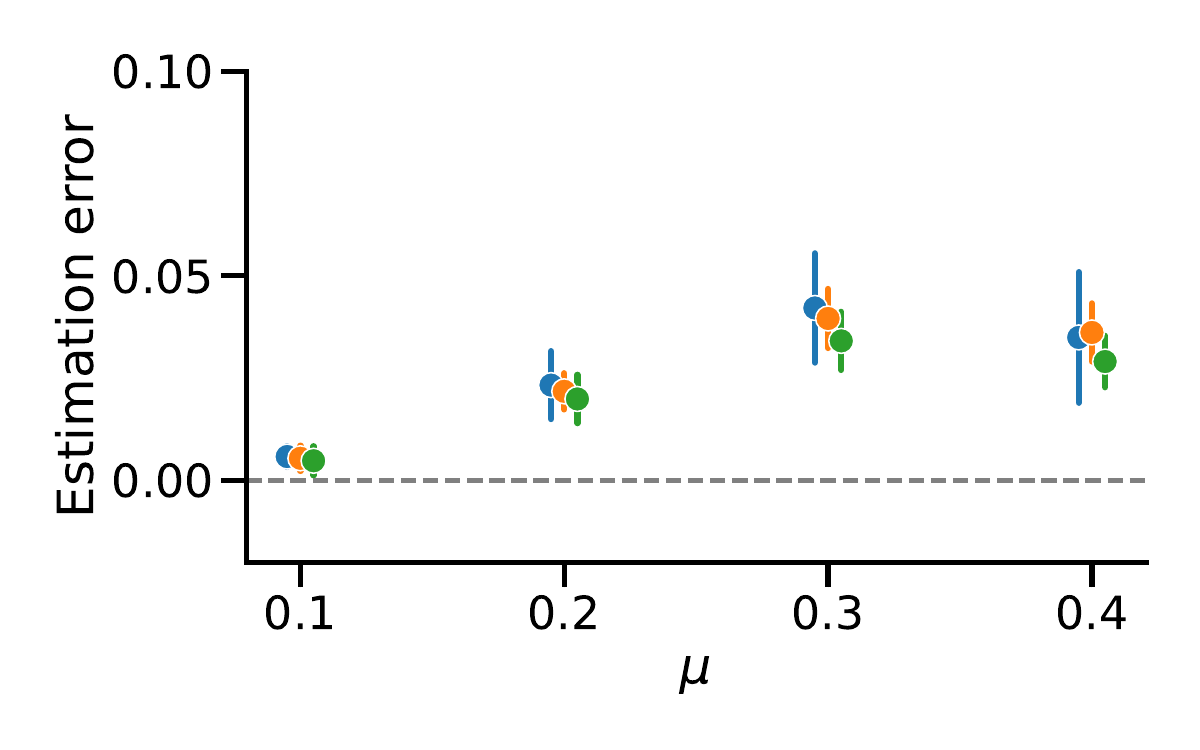}
         \caption{}
         \label{fig:minim_mu_mean}
     \end{subfigure}
     \hfill
     \begin{subfigure}[b]{0.49\textwidth}
         \centering
         \includegraphics[width=\textwidth]{./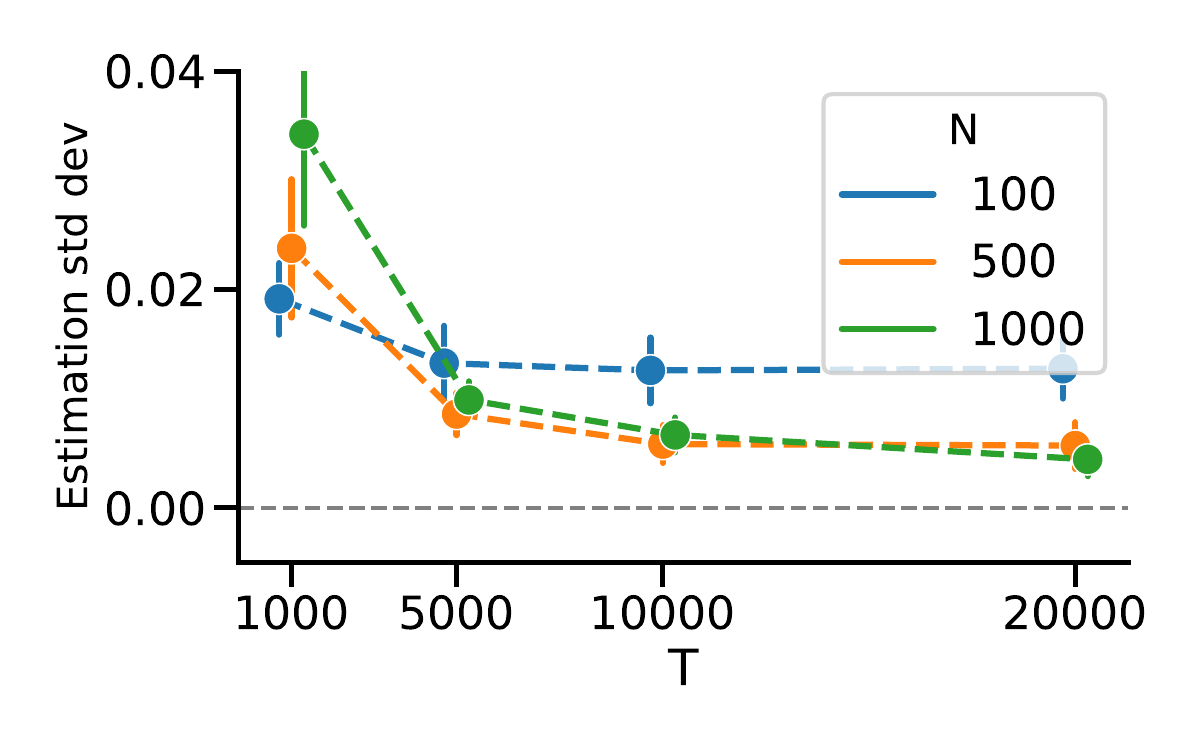}
         \caption{}
        \label{fig:minim_mu_std}
     \end{subfigure}
     \caption{(a) Estimation error for the bounded confidence parameter \est{\rate}. Error bars represent the std. dev. of the error. (b) Std. dev. of the error for the bounded confidence parameter \est{\bc}. Error bars
represent the standard error with \init fixed, $100$ different initializations of $T$ for $N = \{100, 500, 1000\}$ and
$T = \{1000, \ldots , 20000\}$. Repeated for $100$ values of \init.}

\end{figure}

Intuitively, this bias results from the fact that only sequences of successful interactions give any information about \rate, as unsuccessful ones do not change the state of the system.
Let us consider a pair of agents and their interactions over time as a sequence of zeros ($(i, j, t) \notin E$) or ones ($(i, j, t) \in E$).
Information about \rate comes only from the ones in the sequence:
for example, a sequence of zeros contains no information. 
To get information from the interactions at t, there must exist a one in $[0, t -1]$.
Consequently, we have more information from sequences with many ones.
At the same time though, the presence of ones increases the estimate of \rate, since successful interactions are more likely when opinions are closer.
Consequently, we tend to overestimate \rate, since we have more information for high \rate.

\subsection{Likelihood Shape and Identifiability}
\label{sec:exp-est-joint}

When both \bc and \rate are unknown, the same approach can be used for the joint estimation of the parameters. 
Similarly to the case of \rate, we do not have an explicit formula for computing the maximum of the likelihood function, and we thus rely on numerical minimization of the negative log-likelihood function. 

When considering the estimation of both parameters simultaneously, we observe a new interesting phenomenon.
As portrayed in \Cref{fig:joint_estimation_3d} (left panel), some regions of the parameter space behave as expected: the negative log-likelihood profile presents a unique minimum that can be easily computed via standard multivariate minimization methods.
However, there exist specific choices of the parameters that show a completely different likelihood profile (\Cref{fig:joint_estimation_3d} right panel).
In this case, even if the negative log-likelihood still presents a unique global minimum, other local minima can be present, and those minima are connected by a ``flat valley'' in the likelihood, thus introducing practical identifiability issues when little or noisy data is used to estimate the true parameters. 

\colorlet{darkred}{red!70!black}

\begin{figure}[t]
     \centering
     \begin{subfigure}[b]{0.49\textwidth}
         \centering
\includegraphics[width=0.8\textwidth,trim=0 0 150 0,clip]{./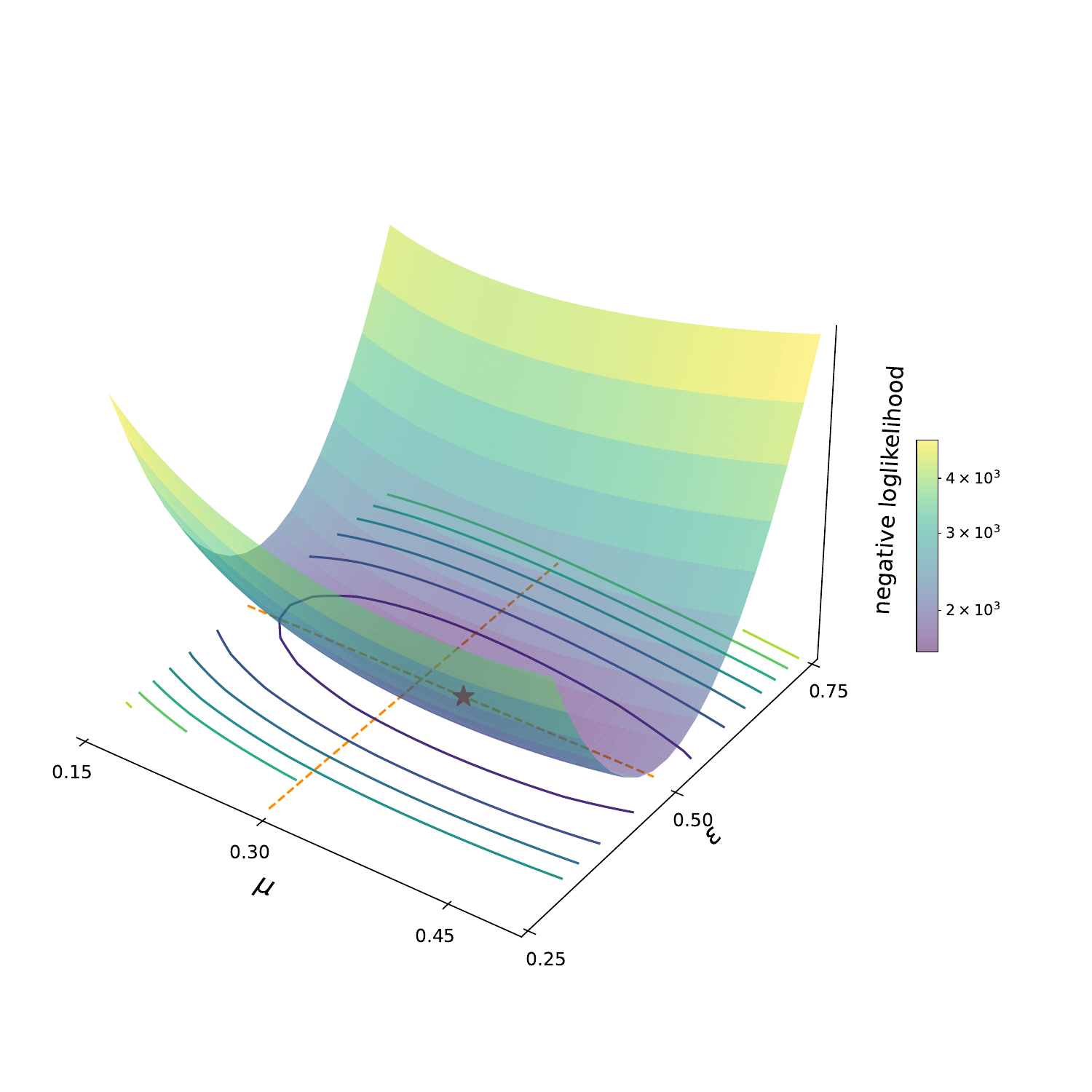}
         \caption{}
         \label{fig:minim_eps_mu_mean}
     \end{subfigure}
     \hfill
     \begin{subfigure}[b]{0.49\textwidth}
         \centering
         \includegraphics[width=\textwidth]{./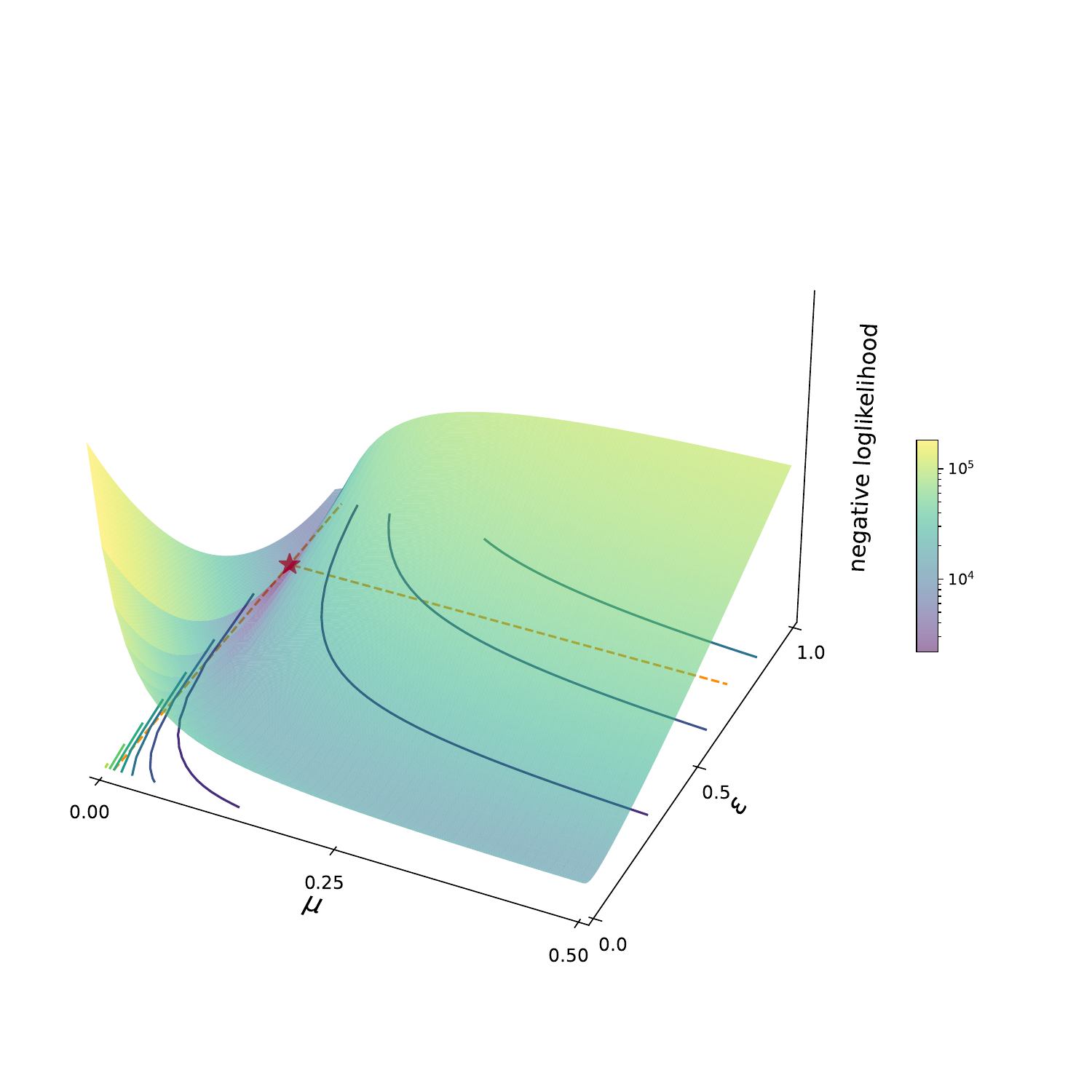}
         \caption{}
        \label{fig:minim_eps_mu_std}
     \end{subfigure}
     \caption{Negative Log Likelihood profile for the joint estimation of (\bc, \rate) for two different values of the true parameters. The true values are indicated by the dashed red lines. The value of the inferred minimum is indicated by a red star ($\textcolor{darkred}{\bigstar}$). The left panel shows a choice of parameters that generates a nll profile with a single but biased minimum, while the right panel shows a case where the system is characterized by possible practical identifiability issues.}
     \label{fig:joint_estimation_3d}
\end{figure}

\section{Discussion}
\label{sec:discussion}

In this work, we have formulated a fundamental question about the identifiability of the parameters of an agent-based model.
We have discussed the identifiability of parameters in the sBCM, and we have shown the vanishing bias in the estimator of the confidence bound~\bc, by connecting the model to theoretical results in Item Response Theory and Rasch models.
We have also shown a consistent bias in the convergence parameter~\rate due to the violation of fundamental assumptions for MLE.
Intuitively, successful interactions lead to convergence of opinions, which in turn increases the probability of success of an interaction:
this feedback loop induces a positive bias in the maximum-likelihood estimate of the parameter \rate.

In summary, even in a very simple model such as the BCM, with deterministic state transition functions, where all-but-one parameters and state variables of the model are observable at the most granular level, we incur in identifiability issues when estimating a single latent parameter.
These issues are likely to be compounded by increased complexity of ABMs and lower availability of data (e.g., using summary statistics rather than micro-level observations, or having latent state variables).

Considering the simplicity of the model we analyzed, our results should raise concerns about the implications for more complex agent-based models, such as those describing economic or biological systems.
Given the practical necessity of estimating these models with real-world data, prioritizing the study of identifiability to get meaningful inferences is paramount.
In fact, while it is common to use simulation-based approaches to estimate the parameters of complex ABMs~\cite{cranmer2020frontier}, there is a dearth of rigorous efforts to demonstrate the reliability of such estimation processes.
On the contrary, we argue that likelihood-based methods~\cite{lenti2024likelihoodbased} provide access to the conceptual toolbox of statistical inference to analyze such issues---as long as the ABM can be cast into a generative probabilistic model.
This work 
shows that in the ABM under scrutiny the violation of the standard MLE regularity conditions corresponds to a bias in the estimate.
Our findings sketch a general methodology to tackle the fundamental issue of identifiability and bias in the calibration of ABMs.
Further investigation of the statistical properties of MLE in the context of ABMs has the potential to yield efficient tools to provide guarantees on the calibration of model parameters, thereby spotting potential issues beforehand.

To achieve this goal, new formalisms to describe such agent-based models could prove helpful by providing a better-suited alternative to probabilistic graphical models.
In fact, in our case, the bias of the estimate of~\rate can not be identified directly from the graphical model of \Cref{fig:graphical-model}.
Not only \rate affects a state variable~\statevector in the graphical form, but it also affects the functional form of its support in \Cref{eq:modified_bcm} (which is not therein represented).
As an additional example, if \bc were to affect the functional form of the update rule, then also~\bc would not be ML-estimable.
More precisely, a \emph{sufficient} condition for having identifiability issues with a parameter $\phi$ in a sBCM-like model is that $\phi$ affects both state-space variables (i.e.\ their support) and the actual dynamic for the evolution of the opinions (i.e.\ the probabilities).
Indeed, in this case, at each time step both the support and the likelihood depend on such parameter.
This situation implies a violation of one of the regularity conditions of MLE (specifically, A3 in~\citeauthor{casella2002}~\cite{casella2002}), thus invalidating its usual guarantees.
Other assumptions of regularity could also be violated and hinder consistency in other ways.
For example, one could devise a model where multiple values of a parameter lead to the same outcome (observational equivalence).
Enhanced formal representations of ABMs should represent these aspects of the relationships among its variables.

In general, a more close-knit relation between the calibration of agent-based models and statistical inference is desirable~\cite{monti2023learning}.
This line of investigation could pave the way toward more ambitious goals, such as estimating micro-scale, individual, latent parameters and variables in ABMs, or performing model selection among different competing ABMs.
In the opinion dynamics model we considered, previous work~\cite{monti2020learning, lenti2024likelihoodbased} already identified the potential of likelihood-based methods for estimating its micro-parameters (the initial opinion of each agent in the system \init) from observable data.
Further work is needed to establish the guarantees offered by such estimates.
Proper initialization of agent-level parameters is essential in complex agent-based models, such as those used to simulate the epidemic and economic outcomes of COVID-19 under different scenarios~\cite{pangallo2024unequal}.
Proven and validated calibration of these parameters is essential for establishing agent-based models as reliable forecasting tools and unlocking new possibilities for predicting the evolution of real-world complex systems.

\section*{Acknowledgement}
 Valentina Ghidini was funded by the SNSF starting grant ``Multiresolution methods for unstructured data'' (TMSGI2\_211684).

\bibliographystyle{plainnat}
\bibliography{references}

\clearpage
\FloatBarrier

\section*{Supplementary Information}
\appendix

\section{Connection with Rasch models}
\label{apx:rasch}
Here we establish a connection between the sBCM framework and Rasch models (see \Cref{lemma:rasch}).
Rasch models are central in Item Response Theory, within psychometrics and educational assessments~\cite[Ch. 16]{fischer200616}.
In item response experiments, a set of individuals are asked to respond to a set of items.
Each individual $j$ is characterized by a personal ability, $\theta_j$, and each item is characterized by a difficulty parameter, $b_k$.
Under the assumptions of Rasch models, the probability that individual $j$ correctly answers item $k$ increases with the person's ability, while it decreases with the item difficulty.

\begin{proof}[Proof of \Cref{lemma:rasch}]
We denote the answer of individual $j$ to item $k$ as $u_{jk}$, and $u_{jk}$ is equal to 1 if the answer is correct, 0 otherwise.
In the most general form~\cite{lord1983unbiased}, the answer is modeled as a 3 parameter logistic model,

\begin{align}
    P(u_{jk} = 1 \mid A_k, b_k, c_k, \theta_k) = c_k + \frac{1 - c_k}{1 + \exp{(-A_k (\theta_j - b_k))}}.
\label{eq:rasch_prob_full}
\end{align}

In many cases, the model is simplified by assuming $c_k = 0$ and $A_k = 1$~\cite{fischer200616}, resulting in a 1 parameter logistic model,
\begin{align}
P(u_{jk} = 1 \mid \theta_j,  b_k ) = \sigma_{1}(\theta_j - b_k).
\label{eq:rasch_prob}
\end{align}

Afterwards, we focus on sBCM.
For consistency, we define $s_{ij}^t$ as the outcome of interaction the between $i$ and $j$ at time $t$.
If $(i,j,t)\in E$, $s_{ij}^t = 1$, otherwise $s_{ij}^t = 0$.
So, we rewrite \Cref{eq:pijt}, 
\begin{align}
P(s_{ij}^t = 1 \mid \bc, x_i, x_j ) = \sigma_{\rho}(\bc - \lvert x_i - x_j \rvert).
\label{eq:sbcm_prob}
\end{align}
We observe that \Cref{eq:rasch_prob_full} is transformed to \Cref{eq:sbcm_prob} by mapping calling $b_k \triangleq \lvert x_i - x_j \rvert$, $\theta_k \triangleq \bc$, $A_k \triangleq \rho$, and $c_k \triangleq 0$.
\end{proof}

\section{Bias and variance of \est\bc}
\label{apx:biasvariance}
\citeauthor{lord1983unbiased} prove the formula of the bias and variance of a Rasch model.
Based on the definition of \Cref{eq:rasch_prob_full} for a single person parameter ($\theta_j = \theta$), we define $P_i \triangleq P(u_{i} = 1 \mid A_i, b_i, c_i, \theta)$ and $Q_i \triangleq 1 - P_i$.
$P_i$ is the probability that the person with ability $\theta$ correctly responds to answer $i$, where $i \in \{1,\ldots,n\}$.
Since we are interested in sBCM, in the following we refer to the case where $c_i = 0$.

From the proof of \citeauthor{lord1983unbiased}, the variance of the MLE, which is equal to the variance of the bias, is 
\begin{equation}
    Var(\hat{\theta}) = \frac{1}{I} + o(n^{-1}),
\end{equation}

where $I$ is the Fischer information of $\theta$,
\begin{equation}
I = \sum\limits_{i=1}^n \frac{P_i'^2}{P_i Q_i},
    \label{eq:fischer}
\end{equation}

$P_i'$ is the derivative of $P_i$ with respect to $\theta$.

The bias is proven to be
\begin{equation}
Bias(\hat{\theta}) = \frac{1}{I^2} \sum\limits_{i = 1}^n A_i I_i \left(\phi_i - \frac{1}{2}\right),
    \label{eq:bias_lord}
\end{equation}
where 
\begin{equation}
    \phi_i = \frac{P_i - c_i}{1 - c_i},
\end{equation}
and 
\begin{equation}
I_i = \frac{P_i'^2}{P_i Q_i}.
    \label{eq:fischer_i}
\end{equation}

Therefore, we rewrite \Cref{eq:bias_lord} with the notation of sBCM.
Since $c_i = 0$, we have that $\phi_i = P_i = \kappa_h$.
We can also compute $P_i'$, resulting

\begin{equation}
P_i' = A_i P_i Q_i = \rho \kappa_h (1 - \kappa_h).
    \label{eq:derivative_p}
\end{equation}

By properly developing \Cref{eq:bias_lord}, we have
\begin{equation}
\begin{split}
    Bias(\hat{\theta}) &= \frac{1}{\left(\sum\limits_{h \in \mathcal{T}} \frac{(\rho \kappa_h (1 - \kappa_h))^2}{\kappa_h 1 - \kappa_h}\right)^2} \sum\limits_{h \in \mathcal{T}} \rho \frac{(\rho \kappa_h (1 - \kappa_h))^2}{\kappa_h (1 - \kappa_h)} \left(\kappa_h - \frac{1}{2}\right)\\
    &= \frac{1}{\rho \sum\limits_{h \in \mathcal{T}} \kappa_h (1 - \kappa_h)} \sum\limits_{h \in \mathcal{T}} \kappa_h (1 - \kappa_h) \left(\kappa_h - \frac{1}{2}\right).
\label{eq:derivation_bias_sbcm}
\end{split}
\end{equation}

This way, we obtain \Cref{eq:bias_rasch} of the main text.

\begin{figure}
    \centering
    \includegraphics[width=0.5\linewidth]{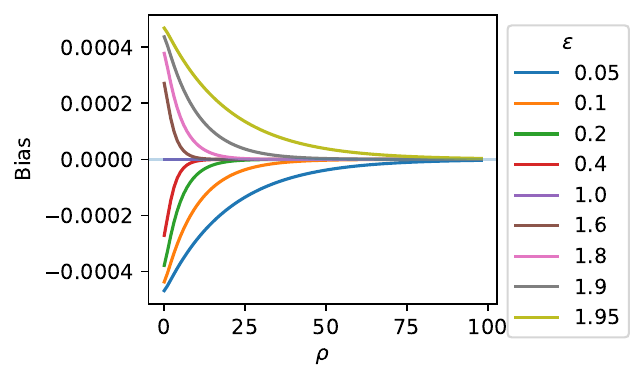}
    \caption{Theoretical bias in Rasch model, as a function of $\rho$. We compute the theoretical bias in experiments with 2000 items equally spaced in $[0,1]$.
    The plot shows that the bias goes to 0 as $\rho \rightarrow \infty$.
    As the model gets closer to the deterministic version of Rasch model, the bias vanishes.}
    \label{fig:rho_vs_bias}
\end{figure}

\end{document}